\newtheorem{definition}{\emph{\underline{Definition}}}
\newtheorem{lemma}{\emph{\underline{Lemma}}}
\newtheorem{observation}{\emph{\underline{Observation}}}
\newtheorem{corollary}{\emph{\underline{Corollary}}}
\newtheorem{proposition}{\emph{\underline{Proposition}}}
\newtheorem{remark}{\bf \emph{\underline{Remark}}}
\def\l{\left}
\def\r{\right}
\def\({\left(}
\def\){\right)}
\def\b0{{\mathbf{0}}}
\newcommand{\nn}{\nonumber}
\begin{document}
\captionsetup[figure]{name={Fig.}}

\title{\huge 
Near-Field Beam Training: Joint Angle and Range\\ Estimation with DFT Codebook} 
\author{Xun Wu, Changsheng You, \IEEEmembership{Member, IEEE}, Jiapeng Li, 
	and Yunpu Zhang
	\thanks{All authors are with the Department of Electronic and Electrical Engineering, Southern University of Science and Technology, Shenzhen 518055, China. (e-mails: wux2022@mail.sustech.edu.cn; youcs@sustech.edu.cn; lijp2023@mail.sustech.edu.cn; yunpuzhangcsu@gmail.com). \emph{(Corresponding author: Changsheng You.)} \\
	\indent This work of Changsheng You was supported in part by the National Natural Science Foundation of China under Grant 62201242, 62331023, in part by the National Key R\&D Program Youth Scientist Project under Grant 2023YFB2905100, in part by Shenzhen Science and Technology Program under Grant 20231115131633001, in part by Young Elite Scientists Sponsorship Program by CAST under Grant 2022QNRC001.	 
	 }

}\maketitle

\begin{abstract} 
Prior works on near-field beam training have mostly assumed dedicated polar-domain codebook and on-grid range estimation, which, however, may suffer long training overhead, high codebook storage requirement, and degraded estimation accuracy. 
To address these issues, we propose in this paper new and efficient beam training schemes with off-grid range estimation by using conventional discrete Fourier transform (DFT) codebook. 
Specifically, we first analyze the received beam pattern at the user when far-field beamforming vectors are used for beam scanning, and show an interesting result that this beam pattern contains useful user angle and range information. 
	Then, we propose two efficient schemes to jointly estimate the user angle and range with the DFT codebook. 
	The first scheme estimates the user angle based on a defined angular support and resolves the user range by leveraging an approximated angular support width, while the second scheme estimates the user range by minimizing a power ratio mean square error (MSE) to improve the range estimation accuracy.
	Finally, numerical simulations show that our proposed schemes greatly reduce the near-field beam training overhead and improve the range estimation accuracy as compared to various benchmark schemes.
	
\end{abstract}
\begin{IEEEkeywords}
	Extremely large-scale array (XL-array), near-field communication, beam training.
\end{IEEEkeywords}
\vspace*{-5pt}
\section{Introduction}

Extremely large-scale array/surface (XL-array/surface) has emerged as a promising technology for future sixth generation (6G) wireless systems to achieve ultra-high spectral efficiency and extremely high spatial resolution in high-frequency bands, hence accommodating new applications such as extended reality, holographic video, autonomous driving, etc \cite{GiordaniToward2020,CuiNear2023,WuIntelligent2021}.
By increasing the number of antennas by another order-of-magnitude, XL-arrays are expected to bring a fundamental change in the electromagnetic (EM) propagation modelling, shifting from conventional far-field radio propagation to the new \emph{near-field} channel modelling \cite{ChenBeam2023,you2023nearfield,9133142}. 

In particular, unlike the planar wavefront assumption for the far-field channel modelling, the near-field channels need to be accurately modeled based on the \emph{spherical} wavefronts, which brings both opportunities and challenges. 
Specifically, different from conventional maximum ratio transmission (MRT) based far-field beamforming that steers the beam energy at a certain angle (called far-field beam-steering), the near-field beamforming based on the spherical wavefront characteristic makes it possible to concentrate the beam energy at a certain location/region, which is termed as \emph{near-field beam-focusing} in the literature \cite{10068140,9723331,NepaNear2017}.
The new beam-focusing effect offers a new degree-of-freedom (DoF) to flexibly/dynamically control the beam energy distribution in both the angle and \emph{range} domains \cite{10098681,WeiChannel2022}.
Thanks to the larger spatial/range DoF, the rank of line-of-sight (LoS) multiple-input multiple-output (MIMO) channel in the near-field may be larger than one, which provides the potential to support multiple data streams for enhancing the multiplexing gain \cite{liu2023nearReview,10117500}.
Moreover, by judiciously controlling the beam-focusing region, XL-arrays can effectively eliminate inter-user interference to improve the communication performance \cite{luo2022beam,10147356}. 
	In addition, the near-field beam-focusing effect can also be exploited to improve the power charging efficiency of wireless power transfer (WPT) systems \cite{zhang2023joint,zhang2023swipt}, as well as the accuracy and resolution of  indoor/outdoor localization \cite{abu2021near,guidi2021radio,wang2023near,cong2023near}.
	Near-field channel brings more opportunity for localization, since it can provide more accurate user angle and distance information \cite{guidi2021radio,abu2021near}.
It is also suggested that ISAC systems can benefit more from the     near field compared with the far field \cite{wang2023near,cong2023near}.
In this paper, we propose efficient beam training methods for near-field communications and show an interesting result that the DFT codebook, conventionally used for far-field beam training, can also be utilized to find the user angle and range in the near-field beam training which achieves more accurate range estimation. 

\vspace*{-5pt}
\subsection{Related Works}
In the existing literature, near-field beamforming designs and their communication performances in various systems have been widely studied \cite{9738442,10123941,LuCommunicating2022,zhang2023mixed}. 
Specifically, the authors in \cite{9738442} considered three beamforming architectures for near-field multi-user communication system (including fully-digital architecture, hybrid digital-analog architecture, and dynamic metasurface antenna architecture), and optimized both the configuration and digital
precoding to maximize achievable sum-rate in multi-user systems.
It was shown that near-field beamforming can utilize the new range domain to effectively suppress inter-user interference, even when users are located at the same angle, but different ranges. 
Then, a new location divided multiple access (LDMA) was proposed in \cite{10123941} to increase the system capacity and enhance multi-user access ability in the near-field region. 
Moreover, the authors in \cite{LuCommunicating2022} provided a unified framework to model the discrete antenna arrays and continuous surfaces, based on which they derived a closed expression for the received signal-to-noise ratio (SNR) under optimal single-user MRT beamforming.
It was revealed that the SNR increases with the number of antennas, while suffering a diminishing return.
Furthermore, the authors in \cite{zhang2023mixed} considered a new mixed-field scenario with both near- and far-field users in wireless systems and characterized the inter-user interference using Fresnel functions.
Interestingly, they showed that when the far- and near-field users are located at similar angles, there may exist strong inter-user interference due to the new \emph{energy-spread} effect. 
Apart from near-field beamforming optimization, near-field beam training design is also of paramount importance, which provides necessary channel state information (CSI) to establish high SNR initial links before data transmission. 
Note that in high-frequency bands such as $ 30 $ GHz (mmWave), the direct channel estimation methods may not be efficient due to the channel sparsity and high path-loss in mmWave signals.  As such, the received signal power may be small or even negligible if the XL-array beamforming vector is not well aligned with the channel paths. 
Compared with conventional far-field beam training, near-field beam training is new and more challenging due to the new spherical wavefront model \cite{ZhouSperical2015}. 
As such, directly applying the DFT codebook to find the best codeword based on the maximum received signal power will result in significant performance degradation due the near-field energy-spread effect \cite{9129778,Cui2022channel}. 
This thus calls for developing new approaches to design near-field codebooks and dedicated beam training schemes. 
Specifically, for the codebook design, the authors in \cite{Cui2022channel} proposed a new polar-domain codebook, which samples the angle domain uniformly and the range domain non-uniformly.
Based on this codebook, a straightforward beam training approach is exhaustively searching over all candidate codewords in the two-dimensional (2D) polar-domain. This, however, requires a prohibitively high training overhead with the required number of training symbols being the product of numbers of antennas and sampled ranges.
To address this issue, a new two-phase near-field beam training method was proposed in \cite{Zhang2022fast}, which first estimates the user angle using the DFT codebook and then resolves the user range using the polar-domain codebook. 
Although this two-phase near-field beam training method requires a lower training overhead which is proportional to the number of antennas, it may be still unaffordable when the number of XL-array antennas is very large.
This thus motivated recent research efforts to design hierarchical near-field beam training methods to further reduce the training overhead. For example, the authors in \cite{wu2023twostage} proposed an efficient two-stage hierarchical beam training method, which firstly searches the coarse user angle based on far-field hierarchical codebook and then searches finer-grained user angle-and-range pair with a dedicated near-field hierarchical codebook. 
	In \cite{lu2022hierarchical}, the authors  first proposed a hierarchical near-field codebook,  and then devised an efficient hierarchical beam training scheme for the hybrid beamforming architecture. In addition,  the authors  in \cite{10217152} proposed to project the near-field channel into the spatial-angular and slope-intercept domains, and developed two hierarchical beam training schemes to reduce the beam training overhead.
In addition, deep learning methods were also utilized in \cite{Pan2022deep} to reduce the training overhead by utilizing two separated convolutional neural networks (CNNs) to estimate the user angle and range based on far-field wide beams.
\vspace*{-5pt}
\subsection{Motivations and Contributions} 
The existing works on near-field beam training still have some limitations. First, it is widely assumed that the DFT codebook cannot be used for estimating the user angle and range in near-field beam training, as the conventional beam training method  determines the user angle based on the largest received power at the user. 
	This, however, may not have fully exploited the received power pattern at the user; hence, it is still questionable whether the DFT codebook indeed cannot be used for estimating both the user and range in the near-field region. Second, most existing works have considered the on-grid beam training in both the angular and range domains. In particular, the polar-domain codebook usually consists of a small number of sampled ranges only (e.g., $5$ sampled ranges for the case with $256$ antennas and $30$ GHz carrier frequency). This inevitably results in low-resolution range estimation, which is undesirable for location/sensing-sensitive applications such as autonomous driving, immersive reality and telemedicine \cite{you2023nearfield}. Moreover, the polar-domain codebook requires a large storage. 

 Motivated by the above, we revisit the near-field beam training design based on the DFT codebook, and propose new and efficient schemes to jointly estimate the user angle and range.
The main contributions of this paper are summarized as follows.

\begin{itemize}
	\item First, we analyze the received beam pattern at the user when the conventional DFT codebook is used for beam sweeping. 
	\emph{Interestingly}, we show that both the user angle and range can be inferred from the received beam pattern. 
	Specifically, the user angle can be estimated based on a defined \emph{angular support} that characterizes the angle region with dominant received signal powers, while the user range can be inferred from \emph{angular support width}, which is \emph{monotonically decreasing} with the user range given fixed user angle. 
	Nevertheless, it is intractable to obtain an explicit expression for the angular support width. Thus,  we define a new \emph{surrogate angular support width}, and analytically characterize its expression with respect to (w.r.t.) the user angle and  range in a semi-closed form and discuss its main properties. 
	\item Second, we propose two efficient schemes to jointly estimate the user angle and range under the DFT codebook. 
	Specifically, the first scheme estimates the user angle by determining the middle of defined angular support, and then estimates the user range by leveraging the width of surrogate angular support. 
	This scheme, however, may suffer degraded range estimation accuracy in the low-SNR regime due to potential  inaccurate angular support width estimation. 
	To improve the range estimation accuracy, a second near-field beam training scheme is proposed, which utilizes the same user angle estimation method, while the user range is estimated by more effectively exploiting all high-SNR received powers within the surrogate angular support via minimizing sum power-ratio MSE. 
	
	\item Finally, numerical results are provided to demonstrate the effectiveness of our proposed joint angle and range estimation schemes for near-field beam training. 
	It is shown that both proposed schemes can achieve higher range estimation accuracy as compared to existing benchmark schemes based on the polar-domain codebook. 
	This is because our proposed schemes more effectively exploit the received beam pattern for achieving off-grid range estimation. 
	Moreover, in the high-SNR regime, the proposed schemes obtain higher achievable rates than the exhaustive-search 2D beam training due to the more accurate range estimation, while they suffer slightly degraded rate performance in the low-SNR regime. 
	In addition, the second scheme outperforms the first scheme in terms of achievable rate because it is more robust to noise by  exploiting all high-SNR received powers within the surrogate angular support.
\end{itemize}

It is worth noting that several near-field localization methods have also been proposed in the literature. For example,  in \cite{abu2021near}, the authors performed a Fisher information analysis and proposed a low-complexity 3D localization algorithm, which transforms the 3D problem into three 1D problems. Additionally, in \cite{guidi2021radio}, the authors investigated the possibility of inferring the position of a single antenna transmitter using a single asynchronous receiving node by extracting information from the incident spherical wavefront.
These localization methods exploit the received signal at the XL-array to estimate the location of the user, while the beam training methods studied in this paper aim to find the best user beam codeword for communication based on the received signal powers at the users.
\subsection{Organization and Notations}
\emph{Organization:}
The rest of this paper is organized as follows. 
We present the near-field system model in Section II, and introduce several existing benchmark schemes for near-field beam training in Section III.
Then, we analyze the received beam pattern at the user
in Section IV, and propose two efficient schemes to jointly estimate the user angle and range based  on conventional DFT codebook in Section V.
Numerical results are provided in Section VI, followed by conclusions made in Section VII.

\emph{Notations:}
Lower-case, upper-case boldface letters denote vectors, matrices, respectively, e.g $\textbf{a}$ and $\textbf{A}$, while $a$ and $\mathcal{A}$ represent a scale and a set.
For a vector, $(\cdot)^T$ and $(\cdot)^H$ denote its transpose and conjugate transpose, respectively. $|\cdot|$ denotes the absolute value for a number and the cardinality for a set, respectively. The main symbols used in this paper are summarized in Table~\ref{Table}.

\begin{table*}[htb]
		\renewcommand{\arraystretch}{1.5}
		  \setlength{\tabcolsep}{6pt}
		  \vspace{-10pt}
		 \centering
		 \caption{List of main symbols and their physical meanings.}
		 \label{Table}
		 \vspace*{-2pt}
	\begin{tabular}{|c|l|c|l|}
		\hline
		$N$                                                     & Number of BS antennas              & $\tilde{\mathbf{W}}_{\rm{DFT}}$                                    & DFT codebook                    \\ \hline
		$D$                                                     & Array aperture size                & $\bar{\mathbf{w}}_{\rm n,s}$                                           & Polar-domain codeword           \\ \hline
		$\lambda$                                               & Carrier wavelength                 & $\tilde{\mathbf{w}}_{n}$                                           & DFT codeword                    \\ \hline
		$\mathbf{h}^H_{\rm near}$                               & Near-field user channel            & $K$                                                                & Number of candidate user angles \\ \hline
		$h_{\rm u}$                                & LoS channel path gain                 & $S$                                                                & Number of range samples         \\ \hline
		$ \kappa $                          & Rician factor       & ${\tilde{\mathbf{w}}}(\Omega)$                                     & Far-field beamfroming vector    \\ \hline
		$\theta_{\rm u}$                                        & BS center-user spatial angle       & $\mathcal{A}_{\beta{\rm dB}}(\theta_{\rm u}, r_{\rm u})$           & Angular support                 \\ \hline
		$r_{\rm u}$                                             & BS center-user range               & $\widehat{\mathcal{A}}_{\beta{\rm dB}}(\theta_{\rm u}, r_{\rm u})$ & Surrogate angular support       \\ \hline
		$\phi_{\rm u}$                                                & Physical user angle-of-departure   & $\Gamma_{\beta {\rm dB}}(\theta_{\rm u}, r_{\rm u})$               & Angular support width           \\ \hline
		$\mathbf{b}^{H}\left(\theta_{\rm u}, r_{\rm u}\right) $ & Near-field channel steering vector & $\widehat{\Gamma}_{\beta {\rm dB}}(\theta_{\rm u}, r_{\rm u})$     & Surrogate angular support width \\ \hline
		$\bar{\mathbf{W}}_{\rm{Pol}}$                           & Polar-domain codebook              & $G(\theta_{\rm u}, r_{\rm u}, \phi)$                               & Beam power ratio                \\ \hline
	\end{tabular}
\end{table*}

\begin{figure}[t]
	\centering
	\includegraphics[width=8.5cm]{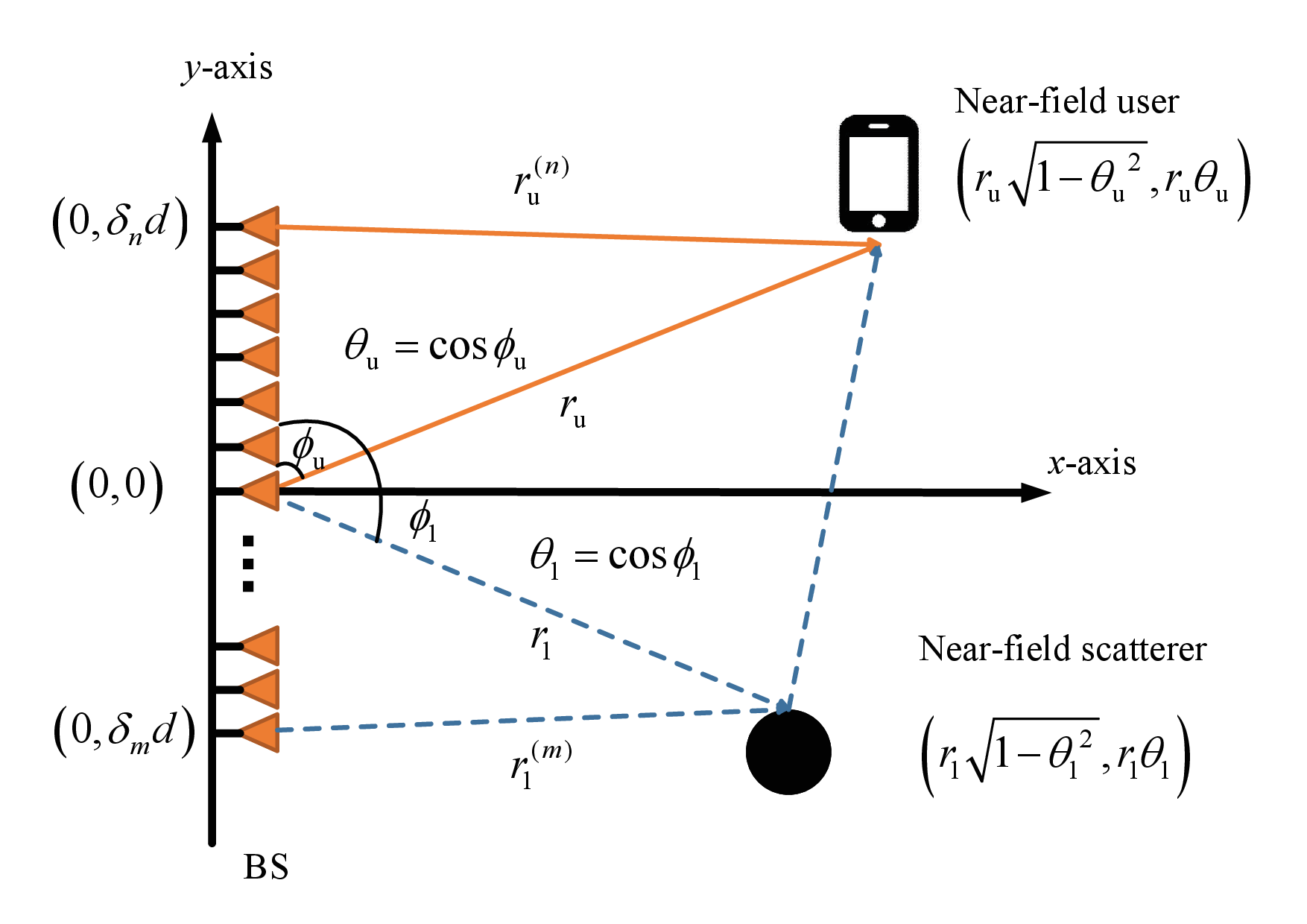}
	\caption{A narrow-band XL-array communication system.}
	\label{fig:sytem_model}
	\vspace*{-10pt}
\end{figure}
\vspace{-3pt}
\section{System model} 
We consider the downlink beam training for a narrow-band XL-array communication system as shown in Fig.~\ref{fig:sytem_model}, where the base station (BS) is equipped with $N$-antenna uniform linear array (ULA) and the user has a single-antenna.\footnote{The proposed near-field beam training schemes can be readily extended to the multi-user scenario by performing the angle-and-range estimation simultaneously for all users.} 

\underline{\bf Near-field channel model:}
We assume that the user is  located in the near-field region of the XL-array, i.e., the BS-user range is no larger  than the Rayleigh distance $Z_{\rm Rayl}=\frac{2D^2}{\lambda}$, where $D$ and $\lambda$ denote the array aperture size and  carrier wavelength, respectively. For example, when $D=1.3$ m and $f = 30$ GHz, the Rayleigh distance is approximately $327$ m.
Moreover, without loss of generality, we assume that the XL-array is placed at the $y$-axis and each antenna $n$ is located at ($0, \delta_{n}d$), where $\delta_{n}=\frac{2n-N-1}{2}$, $n\in \mathcal{N}\triangleq\{1,2,\cdots,N\}$ and $d=\frac{\lambda}{2}$ denotes the  adjacent antenna spacing. 
Then the channel from the XL-array BS to the user can be modeled as \cite{ZhangCapacity}
\begin{equation}
	\mathbf{h}^H_{\rm near} = \sqrt{N}h_{\rm u}\mathbf{b}^{H}(\theta_{\rm u},r_{\rm u}) + \sqrt{\frac{N}{L}}\sum_{\ell=1}^{L}h_{\ell} \mathbf{b}^H (\theta_{\ell},r_{\ell}),
\end{equation}
which includes one LoS path and $L$ non-LoS (NLoS) paths. Parameters $h_{\ell}$, $\theta_{\ell}$, and $r_{\ell}$ denote the complex gain, spatial angle and range of the $\ell^{th}$ path, respectively. The index $\ell = \rm u$ represents the LoS path, while $\ell \ge 1$ denotes the NLoS path. Specifically, the complex gain for the LoS path is given by $h_{\rm u}= \sqrt{\frac{\kappa}{\kappa+1}}\frac{\sqrt{\beta}}{r_{\rm u}} e^{-\frac{\jmath 2 \pi r_{\rm u}}{\lambda}}$, while the complex gain for NLoS path is $h_{\ell} \sim \mathcal{CN} (0,\sigma^2_{\ell})$, where $\sigma_{\ell} = \sqrt{\frac{1}{\kappa+1}} \frac{\sqrt{\beta}}{r_{\rm u}}$ \cite{10123941,ZhangCapacity}.  Parameter $\kappa$, $r_{\rm u}$, and $\beta$ represent the Rician factor, the BS center-user range and the reference channel gain at a range of $1$ m, respectively.
In this paper, we focus on the near-field communication scenarios in high-frequency bands such as millimeter-wave (mmWave) and even terahertz (THz) for which the NLoS channel paths have negligible power due to the severe path-loss and shadowing\cite{10130629,9241752,9528043}. 
Therefore, when the LoS path is dominant (i.e., $ \kappa $ is very large),  the BS$\to $user channel can be approximated as $\mathbf{h}^H_{\rm near} \approx  \sqrt{N}h_{\rm u} \mathbf{b}^{H}(\theta_{\rm u},r_{\rm u})$.
Based on the spherical wavefront propagation model and assuming the BS-user range larger than the \emph{uniform power range}\cite{liu2023nearReview,you2023nearfield}, the near-field channel steering vector is defined as \cite{Zhang2022fast}\footnote{For the general multi-path channel model, new near-field beam training schemes
		need to be developed to cater to the complicated beam pattern, which is more challenging and left for
		our future work.}
 \begin{equation}\label{near_steering}
		\mathbf{b}^{H}\left(\theta_{\rm u}, r_{\rm u}\right)=\frac{1}{\sqrt{N}}\left[e^{-\frac{\jmath 2 \pi(r^{(0)}-r_{\rm u})}{\lambda}}, \cdots, e^{-\frac{\jmath 2 \pi(r^{(N-1)}-r_{\rm u})}{\lambda}}\right],
\end{equation}
where $\theta_{\rm u}\triangleq2d\cos(\phi_{\rm u})/ \lambda\in[-1,1]$ is the \emph{spatial} angle at the BS, $\phi_{\rm u}$ is the \emph{physical} angle-of-departure (AoD) from the BS center to the user, $r_{\rm u}^{(n)}=\sqrt{r_{\rm u}^2+\delta_{n}^2d^2-2r_{\rm u}\theta_{\rm u} \delta_{n}d}$ is the range between the $n$-th BS antenna and the user. 

\underline{\bf Signal model:} 
Let $x\in \mathbb{C}$ denote the transmitted symbol by the BS with power $P$ and $\mathbf{v}\in \mathbb{C}^{N \times 1}$ represent the BS transmit beamforming vector based on power-efficient analog phase shifters \cite{8030501}. Then the received signal at the user is given by
\begin{align}\label{Eq:general_Sig}
	y_{\rm near}
	=\sqrt{N}h_{\rm u} \mathbf{b}^{H}(\theta_{\rm u},r_{\rm u})\mathbf{v}x+z,
\end{align}
where $z$ with power $\sigma^2$ is the effective noise consisting of the received additive white Gaussian noise (AWGN) and received signals from  NLoS paths. 
As such, if given perfect CSI information $(\theta_{\rm u},r_{\rm u})$,  the optimal XL-array BS beamforming can be easily obtained as $\mathbf{v}^* = \mathbf{b}(\theta_{\rm u},r_{\rm u}),$ and the maximum achievable rate at the user in bits/second/hertz (bps/Hz) is $R=\log_{2}\l(1+\frac{P N|h_{\rm u} |^2}{\sigma^2}\r)$.

\section{Benchmark Beam Training Schemes}\label{Sec:Bench}
In this section, we introduce two benchmark schemes for near-field beam training and discuss their main limitations.
\vspace*{-5pt}
\subsection{2D Exhaustive Search under Polar-Domain Codebook}
For near-field beam training, a dedicated \emph{polar-domain} codebook was proposed in \cite{Cui2022channel}, which includes a set of predefined beamforming codewords, each steered to a specific location in the polar-domain. Specifically, the spatial angle domain $[-1,1]$ is uniformly sampled as 
	$\theta_n=\frac{2 n-N+1}{N},  n\in\mathcal{N}.$
While for each sampled angle $\theta_n$, the range domain is \emph{non-uniformly} sampled according to 
\begin{equation}
	r_{s,n}=\frac{1}{s} \alpha_{\Delta}\left(1-\theta_{n}^2\right), \quad \forall s\in\mathcal{S}\triangleq\{1,2,3, \cdots S\},
\end{equation}
where   $\alpha_{\Delta}\triangleq\frac{N^2 d^2}{2 \lambda_c \beta_{\Delta}^2}$, $\beta_{\Delta}$ is a constant value ensuring sufficiently small column coherence of each two near-field steering vectors, and $S$ is the number of sampled ranges determined by the required column coherence. Then  the polar-domain codebook is constructed as 
\begin{equation}
\bar{\mathbf{W}}_{\rm{Pol}} = \left[\bar{\mathbf{W}}_1,\cdots,\bar{\mathbf{W}}_n,\cdots,\bar{\mathbf{W}}_{N} \right],
\end{equation} where $\bar{\mathbf{W}}_n\triangleq\left[\bar{\mathbf{w}}_{n,1}, \cdots,\bar{\mathbf{w}}_{n,s},\cdots \bar{\mathbf{w}}_{n,S}\right]$ and $ \bar{\mathbf{w}}_{n,s}=  \mathbf{b}\left(\theta_n, r_{s, n}\right)$. Given the polar-domain codebook $\bar{\mathbf{W}}_{\rm{Pol}}$, a straightforward beam training method is applying a \emph{2D exhaustive search} to find the best codeword among all polar-domain codewords that yields the maximum received signal power at the user. However, this method  requires a total number of $T^{\rm{(ex)}}=NS$ training symbols, which is prohibitively high when $N$ is large, hence leaving insufficient time for data transmission.

\vspace*{-5pt}
\subsection{Two-Phase Near-field Beam Training}
To reduce the training overhead of  2D exhaustive search, a  two-phase near-field beam training method was proposed in \cite{Zhang2022fast}, which \emph{sequentially} estimates the user angle and range in two phases.
Specifically, in the first phase, the conventional DFT codebook is used for beam sweeping, which is defined as $\tilde{\mathbf{W}}_{\rm{DFT}}=\{\tilde{\mathbf{w}}_1, \cdots,\tilde{\mathbf{w}}_n,\cdots, \tilde{\mathbf{w}}_{N}\}$ where 
\begin{align}\label{Eq:farDFT}
\tilde{\mathbf{w}}_n&=\mathbf{a}(\theta_n)\nn\\
&\triangleq
\frac{1}{\sqrt{N}}\left[1,e^{-\jmath \pi\theta_{n}},\cdots,e^{-\jmath \pi(N-1)\theta_{n}}\right], \forall n\in\mathcal{N}.
\end{align}

With received signal powers at the user, a shortlist of candidate user angles can be estimated according to a key result that the user (spatial) angle is approximated located in the middle of an angular support (which shall be explicitly defined in Section~\ref{Sec:beampattern}). Then in the second phase, the user range is estimated by using the polar-domain codebook to find the best user range given the candidate user angles. This two-phase beam training method entails $T^{(\rm 2P)}=N+KS$ training symbols, where $K$ denotes the number of candidate user angles. However, this method may not provide accurate user range information when $S$ is small, due to the on-grid range estimation. On the other hand, when  $S$ is large, this method still yields relatively high training overhead. 

To address the above issues, we propose two new and efficient near-field beam training schemes using  conventional DFT codebook, which will be shown to achieve more accurate user range estimation as well as lower beam training overhead than the two benchmark schemes.

\vspace*{-5pt}
\section{Near-Field Received Beam Pattern Analysis}\label{Sec:beampattern}
To obtain useful insights into the near-field beam training design under the DFT codebook, we first characterize the received beam pattern at the user when far-field beamforming vectors spanning in the entire angular domain are used for beam scanning. Interestingly, we show that the received beam pattern in the near-field contains both user angle and range information.
\vspace*{-5pt}
\subsection{Received Beam Pattern}
Let ${\tilde{\mathbf{w}}}(\Omega)=\mathbf{a}(\Omega)$ denote a far-field (oriented) beamforming vector, where $\Omega\in[-1,1]$ represents the spatial angle in the entire angular domain. Note that ${\tilde{\mathbf{w}}}(\theta_n)$ in \eqref{Eq:farDFT} can be regarded as a special case of ${\tilde{\mathbf{w}}}(\Omega)$, for which $\Omega=\theta_n$ with $n\in\mathcal{N}$ is the discrete sampled angle.  Consider the near-field user with a channel steering vector $\mathbf{b}^H(\theta_{\rm u}, r_{\rm u})$. When far-field beamforming vectors $\{{\tilde{\mathbf{w}}}(\Omega), \forall \Omega\in[-1,1]\}$ are applied for beam scanning, the received useful signals at the user (without noise taken into account) are obtained as
\begin{equation}
\eta(\Omega) = \sqrt{N}h_{\rm u} \mathbf{b}^{H}(\theta_{\rm u},r_{\rm u})\tilde{\mathbf{w}}(\Omega),~~~ \forall \Omega\in [-1,1].
\end{equation}
Therefore, the received signal power $|\eta(\Omega)|^2$ is fundamentally determined by $\mathbf{b}^{H}(\theta_{\rm u},r_{\rm u})\tilde{\mathbf{w}}(\Omega)$. To characterize its property, we first give the following definitions.

\begin{definition}\label{De:beam_gain}
\emph{Given a user channel steering vector $\mathbf{b}^H(\theta_{\rm u}, r_{\rm u})$, define $ f(\mathbf{w}, \theta_{\rm u}, r_{\rm u})\triangleq|\mathbf{b}^H(\theta_{\rm u}, r_{\rm u}) \mathbf{w}|$ as the normalized beam gain when a  beamforming vector $\mathbf{w}$ is applied. }

\end{definition}
\begin{definition}\label{De:domin}
\emph{Given a user channel steering vector $\mathbf{b}^H(\theta_{\rm u}, r_{\rm u})$, define $\mathcal{A}_{\beta{\rm dB}}(\theta_{\rm u}, r_{\rm u})$ as the  $\beta$dB \emph{angular support} when far-field beamformers $\{\tilde{\mathbf{w}}(\Omega), \forall \Omega\in[-1,1]\}$ are used for beam scanning. Specifically, $\mathcal{A}_{\beta{\rm dB}}(\theta_{\rm u}, r_{\rm u})$ characterizes a spatial angular region for which the  corresponding normalized beam gain is above $\kappa f_{\rm peak}$, where $\kappa=10^{-\frac{\beta}{20}}\in[0,1]$, and $f_{\rm peak}$ denotes the  peak normalized beam gain. Mathematically, 
\begin{align}\label{Eq:AdB}
	 &\mathcal{A}_{\beta{\rm dB}}(\theta_{\rm u}, r_{\rm u})\nn\\&= 
	 \left\{\Omega_0 \mid f\left( \tilde{\mathbf{w}}(\Omega_0), \theta_{\rm u}, r_{\rm u}\right)\!>\!\kappa \max _{\tilde{\mathbf{w}}(\Omega)} f\left(\tilde{\mathbf{w}}(\Omega), \theta_{\rm u}, r_{\rm u}\right)\right\}.
\end{align}
Moreover, its \emph{angular support width}  is defined as 
\begin{equation}
\Gamma_{\beta {\rm dB}}(\theta_{\rm u}, r_{\rm u})= \Omega_{\rm right}-\Omega_{\rm left},
\end{equation}
where $\Omega_{\rm left}$ and $\Omega_{\rm right}$ are respectively the smallest and largest spatial angle in $\mathcal{A}_{\beta{\rm dB}}(\theta_{\rm u}, r_{\rm u})$ within $[-1, 1]$.}
\end{definition}
\begin{remark} [Proper value of $ \beta $]\label{Remark:Proper beta}
	\emph{Note that it is necessary to set a proper $ \beta $ to guarantee the normalized beam gains within the angular support being no smaller than those at the boundaries (i.e., $ \Omega_{\rm left} $ and $ \Omega_{\rm right} $), since otherwise there may be multiple angular supports, thus affecting the proposed near-field range estimation scheme (See Section~V). To this end, we set $ \beta\approx $ 3 dB in this paper, which is widely used in the existing literature as a proper power-ratio criterion for determining the beam width/depth \cite{10304223,10273772} and shall be numerically verified efficient.}
\end{remark}

In Fig.~\ref{fig:width_distance}, we plot  the normalized beam gains of far-field beamformers  $\{\tilde{\mathbf{w}}(\Omega)\}$ versus the spatial angle $\Omega$ for different user angles and ranges $\{(\theta_{\rm u}, r_{\rm u})\}$.
 Two key observations are summarized  below.
 \begin{observation}\label{OB}\emph{In Fig.~\ref{fig:width_distance}, the received near-field beam pattern under far-field beamformers contains the following useful user angle and range information:
   \begin{itemize}
 \item[1)] \textbf{User angle versus middle of angular support:} The true user spatial angle approximately locates in the \emph{middle} of the $3$dB angular support, i.e., $$\theta_{\rm u}\approx {\rm{Med}}(\mathcal{A}_{3{\rm dB}}(\theta_{\rm u}, r_{\rm u})).$$
 \item[2)] \textbf{User range versus angular support width:} Given the user angle $\theta_{\rm u}$, the angular support width $\Gamma_{3{\rm dB}}(\theta_{\rm u}, r_{\rm u})$ \emph{decreases} with the user range $r_{\rm u}$. Besides, given the user range, the angular support $\Gamma_{3{\rm dB}}(\theta_{\rm u}, r_{\rm u})$ \emph{decreases} with the absolute spatial angle (i.e., $|\theta_{\rm u}|$). 
 \end{itemize}
}
 \end{observation}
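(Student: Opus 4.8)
The plan is to work in the LoS-dominant regime $\mathbf{h}^H_{\rm near}\approx\sqrt{N}h_{\rm u}\mathbf{b}^H(\theta_{\rm u},r_{\rm u})$, so that the entire beam pattern is governed by the inner product $g(\Omega)\triangleq\mathbf{b}^H(\theta_{\rm u},r_{\rm u})\tilde{\mathbf{w}}(\Omega)$ and $|\eta(\Omega)|^2=N|h_{\rm u}|^2|g(\Omega)|^2$. First I would insert the second-order Fresnel expansion $r_{\rm u}^{(n)}-r_{\rm u}\approx-\theta_{\rm u}\delta_n d+\frac{(1-\theta_{\rm u}^2)\delta_n^2 d^2}{2r_{\rm u}}$ into \eqref{near_steering}, and with $d=\lambda/2$ collect the per-antenna phase into a linear and a quadratic part:
\begin{equation}
g(\Omega)\approx\frac{1}{N}\sum_{n}\exp\!\left(\jmath\pi\big[a\,\delta_n-b\,\delta_n^2\big]\right),\quad a\triangleq\theta_{\rm u}-\Omega,\ \ b\triangleq\frac{(1-\theta_{\rm u}^2)d}{2r_{\rm u}}.
\end{equation}
This isolates the two quantities driving the pattern: the angular offset $a$ and the wavefront curvature $b$, the latter carrying all the range dependence.

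For Part 1, I would prove that $|g(\Omega)|$ is \emph{exactly} symmetric about $\Omega=\theta_{\rm u}$ under this approximation. Since the ULA is centered, the index set $\{\delta_n\}$ is symmetric about $0$; reindexing $\delta_n\mapsto-\delta_n$ leaves the quadratic term $-b\delta_n^2$ invariant while flipping the sign of $a\delta_n$, yielding $g(a)=g(-a)$, hence $|g(a)|=|g(-a)|$. Because $a=\theta_{\rm u}-\Omega$, the map $a\mapsto-a$ is reflection of $\Omega$ about $\theta_{\rm u}$, so the superlevel set $\mathcal{A}_{3{\rm dB}}$ of Definition~\ref{De:domin} is symmetric about $\theta_{\rm u}$ and its midpoint $\tfrac12(\Omega_{\rm left}+\Omega_{\rm right})$ equals $\theta_{\rm u}$ exactly. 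The residual ``$\approx$'' in Observation~\ref{OB} I would attribute to (i) truncation of the second-order Fresnel expansion, which perturbs the symmetry by $O(\delta_n^3/r_{\rm u}^2)$ terms, and (ii) observing ${\rm Med}(\cdot)$ over the discrete sampled angles $\{\theta_n\}$ rather than the continuous midpoint, an $O(1/N)$ quantization gap.

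For Part 2, I would complete the square, $a\delta_n-b\delta_n^2=-b\big(\delta_n-\tfrac{a}{2b}\big)^2+\tfrac{a^2}{4b}$, pass to the continuum over $\delta\in[-\tfrac{N}{2},\tfrac{N}{2}]$, and substitute $s=\sqrt{2b}\,(\delta-\tfrac{a}{2b})$ to express the magnitude through the Fresnel integrals $C(\cdot),S(\cdot)$:
\begin{equation}
|g(\Omega)|\approx\frac{1}{N\sqrt{2b}}\Big|[C(s_+)-C(s_-)]-\jmath[S(s_+)-S(s_-)]\Big|,\qquad s_\pm=\pm p-q,
\end{equation}
with $p\triangleq\sqrt{2b}\,\tfrac{N}{2}$ and $q\triangleq\tfrac{a}{\sqrt{2b}}=\tfrac{\theta_{\rm u}-\Omega}{\sqrt{2b}}$. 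The pattern depends on $\Omega$ only through $q$ (with $p$ fixed) and, by the symmetry above, is even in $q$ with its peak at $q=0$; hence the support is $\{q:|g|>\kappa f_{\rm peak}\}=[-q_0(p),q_0(p)]$ for some normalized half-width $q_0(p)$. Translating back gives $\Gamma_{3{\rm dB}}=2q_0(p)\sqrt{2b}=\tfrac{4}{N}\,p\,q_0(p)$. Since $b=\tfrac{(1-\theta_{\rm u}^2)d}{2r_{\rm u}}$ is strictly decreasing in $r_{\rm u}$ and in $|\theta_{\rm u}|$ (for $|\theta_{\rm u}|<1$), and $p=\sqrt{2b}\,N/2$ inherits this monotonicity, both claims follow once $\Gamma_{3{\rm dB}}$ is shown increasing in $b$. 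As a consistency check, $b\to0$ ($r_{\rm u}\to\infty$) recovers the far-field Dirichlet kernel with the fixed $3$dB beamwidth $\propto 1/N$, so the width saturates at its far-field value, while decreasing $r_{\rm u}$ or $|\theta_{\rm u}|$ enlarges $b$ and spreads the beam---the near-field energy-spread effect.

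The hard part will be making the final monotonicity rigorous: because both the prefactor $\sqrt{2b}$ and the normalized half-width $q_0(p)$ vary with $b$, one must show their product $p\,q_0(p)$ is genuinely increasing, which requires controlling the superlevel set of the Fresnel-integral magnitude $|[C(p-q)-C(-p-q)]-\jmath[\cdots]|$ as $p$ grows---a function with no elementary closed form and ripples inherited from the Fresnel oscillations. Rather than fight this directly, I would (as the contributions foreshadow) replace $\Gamma_{3{\rm dB}}$ by a \emph{surrogate} angular support width admitting a semi-closed-form expression in $(\theta_{\rm u},r_{\rm u})$, verify its monotonicity in $r_{\rm u}$ and $|\theta_{\rm u}|$ analytically, and treat Observation~\ref{OB} itself as the exact-pattern statement corroborated by Fig.~\ref{fig:width_distance}.
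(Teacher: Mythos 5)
Your proposal is correct and follows essentially the same route as the paper: the paper treats this Observation as an empirical statement read off Fig.~\ref{fig:width_distance}, and then substantiates it with exactly the machinery you rebuild --- the quadratic Fresnel expansion and Fresnel-integral form of the beam power ratio (Lemma~\ref{Th: domiant_region_relation}), symmetry about the user angle (Lemma~\ref{Le:Sym}, which you establish by the slightly cleaner device of reindexing the centered array $\delta_n\mapsto-\delta_n$ rather than via oddness of $C(\cdot)$ and $S(\cdot)$), the far-field Dirichlet-kernel limit (Corollary~\ref{Cor:rRayl}), and, for the monotonicity in range and angle, a retreat to the surrogate angular support width (Definition~\ref{Def:SAS}, Proposition~\ref{Prop:SASW}) corroborated numerically (Fig.~\ref{fig:width_versus_distance_angle}). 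Your closing admission that the monotonicity of $p\,q_0(p)$ resists a closed-form proof and must be delegated to the surrogate width plus numerical verification is precisely the paper's own treatment, so there is no gap between your plan and what the authors actually do.
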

 \begin{figure}
    \centering
   	\vspace{-10pt}
    \includegraphics[width=8.5cm]{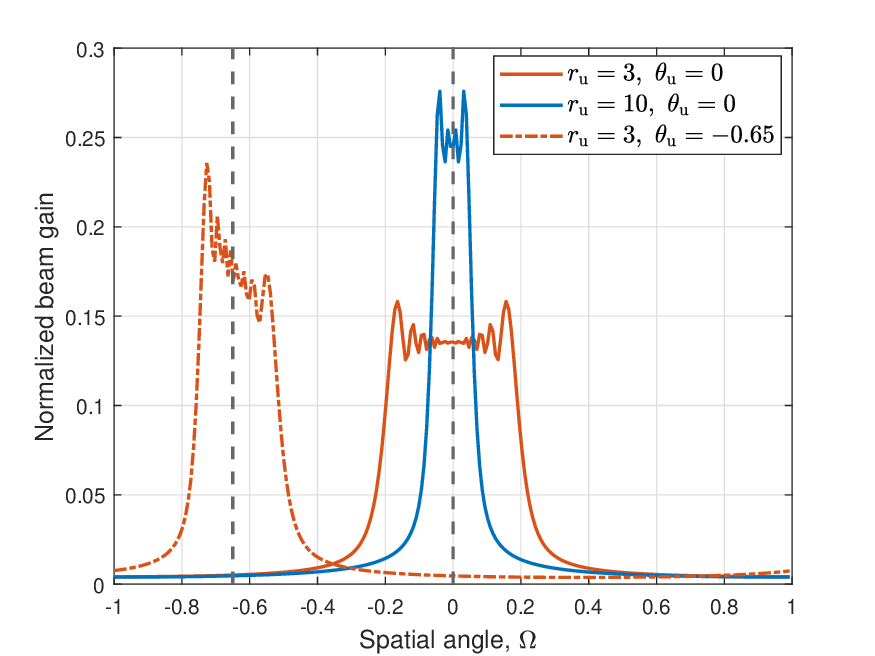}
    \caption{Received beam pattern under far-field beamformers for different user angles and ranges, where $N=256$ and $f=30$ GHz. The true user spatial angle is marked by the black dotted lines.}
    \label{fig:width_distance}
    \vspace{-10pt}
\end{figure}

Observation~\ref{OB} indicates that the user angle can be estimated by determining the middle of the angular support (which was originally revealed in \cite{Zhang2022fast}). Moreover, based on estimated user angle, the user range then may be inferred from the angular support width, since the angular support width monotonically decreases with the user range. However, it is \emph{intractable} to obtain an explicit expression for the angular support width $\Gamma_{\beta {\rm dB}}(\theta_{\rm u}, r_{\rm u})$, because the peak of  normalized beam gains (i.e., $\max _{\tilde{\mathbf{w}}(\Omega)} f\left(\tilde{\mathbf{w}}(\Omega), \theta_{\rm u}, r_{\rm u}\right)$) can be numerically obtained only, hence making it difficult to obtain a closed-form expression. 
\vspace{-5pt}
\subsection{Surrogate Angular Support}
To tackle the above difficulty, we consider an alternative metric, called \emph{surrogate angular support} defined below,  which characterizes the angular support when the reference peak beam gain is replaced by another one when a far-field beamformer is steered towards the user angle. 

\begin{definition}\label{Def:SAS}\emph{Let $g(\theta_{\rm u}, r_{\rm u})\triangleq f(\tilde{\mathbf{w}}(\theta_{\rm u}), \theta_{\rm u}, r_{\rm u})$ denotes the normalized beam gain when a far-field beamforming vector is steered at the user angle, i.e,  $\tilde{\mathbf{w}}(\theta_{\rm u})$. Then, when far-field beamformers $\{\tilde{\mathbf{w}}(\Omega), \forall \Omega\in[-1,1]\}$ are used for beam scanning, $\widehat{\mathcal{A}}_{\beta{\rm dB}}(\theta_{\rm u}, r_{\rm u})$ is defined as the $\beta$dB (properly chosen as discussed in Remark~\ref{Remark:Proper beta}) surrogate angular support, where the normalized beam gains are above $\kappa g(\theta_{\rm u}, r_{\rm u})$. Mathematically,
\begin{align}\label{Eq:SAS}
 \widehat{\mathcal{A}}_{\beta{\rm dB}}(\theta_{\rm u}, r_{\rm u})\!=\! 
	 \left\{\widehat{\Omega}_0 \mid f( \tilde{\mathbf{w}}(\widehat{\Omega}), \theta_{\rm u}, r_{\rm u})>\kappa g(\theta_{\rm u}, r_{\rm u})\right\}.	 
\end{align} 
Moreover, the \emph{surrogate angular support width}  is defined as 
\begin{equation}\label{Eq:SASW}
\widehat{\Gamma}_{\beta {\rm dB}}(\theta_{\rm u}, r_{\rm u})= \widehat{\Omega}_{\rm right}-\widehat{\Omega}_{\rm left},
\end{equation}
where $\widehat{\Omega}_{\rm left}$ and $\widehat{\Omega}_{\rm right}$ are respectively the smallest and largest spatial angle in the surrogate angular support $\widehat{\mathcal{A}}_{\beta{\rm dB}}$ within $[-1, 1]$.   }
\end{definition}

Compared with the angular support given in Definition~\ref{De:domin}, the newly-defined surrogate angular support and its width  can be \emph{analytically characterized}, thanks to a closed-form expression for the normalized beam gain towards the user angle, i.e., $g(\theta_{\rm u}, r_{\rm u})$. In the following, we present several useful properties of surrogate angular support to shed useful insights into the relationship between the user range and the surrogate angular support width.
To this end,  a \emph{beam power ratio} is firstly defined, which is the received-beam-power ratio of far-field beamformers that steered to an arbitrary angle $\phi$ and the user angle $\theta_{\rm u}$,
\begin{equation}\label{Eq:G}
	\!\!\!\!\! G(\theta_{\rm u}, r_{\rm u}, \phi)\!\triangleq\!\frac{f^2\left(\tilde{\mathbf{w}}(\phi), \theta_{\rm u}, r_{\rm u}\right)}{f^2\left(\tilde{\mathbf{w}}(\theta_{\rm u}), \theta_{\rm u}, r_{\rm u}\right)}\!=\!\frac{\left|\mathbf{b}^H(\theta_{\rm u}, r_{\rm u}) \tilde{\mathbf{w}}(\phi)\right|^2}{\left|\mathbf{b}^H(\theta_{\rm u}, r_{\rm u}) \tilde{\mathbf{w}}(\theta_{\rm u})\right|^2}.
\end{equation}
Useful properties of the beam power ratio are given below.
\begin{lemma}\label{Th: domiant_region_relation}\emph{
		The function $G(\theta_{\rm u}, r_{\rm u}, \phi)$ can be approximated as  
		\begin{align}\label{Eq:Gmua}
			\!\!\!\! &G(\theta_{\rm u}, r_{\rm u}, \phi) \stackrel{(c_1,c_2)}{\approx} L(\mu, a)\nn\\
			&\triangleq\! \frac{\left[C\!\l(\!a \mu\!+\!\frac{N}{2\mu}\r)\!-\!C\!\l(\!a \mu\!-\!\frac{N}{2\mu}\r)\right]^2\!+\!\left[S\!\l(\!a \mu\!+\!\frac{N}{2\mu}\r)\!-\!S\!\l(\!a \mu\!-\!\frac{N}{2\mu}\r)\right] ^2}{4\ \left[C^2\l(\frac{N}{2\mu}\r) + S^2\l(\frac{N}{2\mu}\r)\right]}
		\end{align}
		where 
$C(\cdot)$ and $S(\cdot)$ denote the Fresnel integrals defined as $C(x)=\int_{0}^{x}\cos(\frac{\pi}{2}t^2) {\rm d} t$ and $S(x)=\int_{0}^{x}\sin(\frac{\pi}{2}t^2) {\rm d} t$, respectively, and the parameters $\mu$ and $a$ are given by
\begin{equation}\label{Eq:def_mu_a}
	\mu\triangleq\sqrt{\frac{r_{\rm u}}{d(1-\theta_{\rm u}^2)}}, ~~~~~a\triangleq\theta_{\rm u}-\phi.
\end{equation}  }
\end{lemma}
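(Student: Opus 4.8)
The plan is to evaluate the inner product $\mathbf{b}^{H}(\theta_{\rm u},r_{\rm u})\tilde{\mathbf{w}}(\phi)$ in closed form and reduce it to a Fresnel-integral expression. Using the symmetric antenna index $\delta_{n}=\frac{2n-N-1}{2}$ and noting that converting $\tilde{\mathbf{w}}(\phi)$ from the $0,\dots,N-1$ indexing to the symmetric one only introduces a global phase that drops out of $|\cdot|$, so that $[\tilde{\mathbf{w}}(\phi)]_{n}\propto\frac{1}{\sqrt{N}}e^{-\jmath\pi\delta_{n}\phi}$, I would first write
\begin{equation}
\mathbf{b}^{H}(\theta_{\rm u},r_{\rm u})\tilde{\mathbf{w}}(\phi)=\frac{1}{N}\sum_{n=1}^{N}e^{-\frac{\jmath 2\pi(r_{\rm u}^{(n)}-r_{\rm u})}{\lambda}-\jmath\pi\delta_{n}\phi}.
\end{equation}
The essential analytical input is the second-order (Fresnel) expansion of the per-antenna range, $r_{\rm u}^{(n)}-r_{\rm u}\approx-\theta_{\rm u}\delta_{n}d+\frac{(1-\theta_{\rm u}^{2})\delta_{n}^{2}d^{2}}{2r_{\rm u}}$, which is exactly the regime underlying \eqref{near_steering}; this is the approximation labelled $(c_{1})$. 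Substituting $d=\lambda/2$ and the definitions in \eqref{Eq:def_mu_a} collapses the phase to the clean quadratic $\pi a\delta_{n}-\frac{\pi}{2\mu^{2}}\delta_{n}^{2}$, so the entire sum depends on $(\theta_{\rm u},r_{\rm u},\phi)$ only through $(\mu,a)$.

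Next I would replace the discrete sum over $\delta_{n}\in\{-\frac{N-1}{2},\dots,\frac{N-1}{2}\}$ (unit spacing) by the integral $\int_{-N/2}^{N/2}e^{\jmath\pi(ax-x^{2}/(2\mu^{2}))}\,dx$; this Riemann-sum passage is approximation $(c_{2})$. Completing the square gives $ax-\frac{x^{2}}{2\mu^{2}}=-\frac{(x-\mu^{2}a)^{2}}{2\mu^{2}}+\frac{\mu^{2}a^{2}}{2}$, after which the substitution $t=(x-\mu^{2}a)/\mu$ turns the integral into $\mu\,e^{\jmath\pi\mu^{2}a^{2}/2}\int_{-N/(2\mu)-\mu a}^{N/(2\mu)-\mu a}e^{-\jmath\frac{\pi}{2}t^{2}}\,dt$. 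Splitting $e^{-\jmath\frac{\pi}{2}t^{2}}=\cos(\frac{\pi}{2}t^{2})-\jmath\sin(\frac{\pi}{2}t^{2})$ and invoking the oddness $C(-y)=-C(y)$, $S(-y)=-S(y)$ to rewrite the limits, the integral becomes $[C(a\mu+\frac{N}{2\mu})-C(a\mu-\frac{N}{2\mu})]-\jmath[S(a\mu+\frac{N}{2\mu})-S(a\mu-\frac{N}{2\mu})]$.

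Taking $|\cdot|^{2}$ then yields the numerator of $L(\mu,a)$ in \eqref{Eq:Gmua}, up to the prefactor $\mu^{2}/N^{2}$; the unimodular factors (the global phase from the index shift and $e^{\jmath\pi\mu^{2}a^{2}/2}$) contribute nothing here. For the denominator I would repeat the computation at $\phi=\theta_{\rm u}$, i.e. $a=0$, where oddness collapses the two Fresnel differences to $2C(\frac{N}{2\mu})$ and $2S(\frac{N}{2\mu})$, giving $|\mathbf{b}^{H}\tilde{\mathbf{w}}(\theta_{\rm u})|^{2}\approx\frac{4\mu^{2}}{N^{2}}[C^{2}(\frac{N}{2\mu})+S^{2}(\frac{N}{2\mu})]$. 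Forming the ratio $G=|\mathbf{b}^{H}\tilde{\mathbf{w}}(\phi)|^{2}/|\mathbf{b}^{H}\tilde{\mathbf{w}}(\theta_{\rm u})|^{2}$ cancels the common $\mu^{2}/N^{2}$ and reproduces $L(\mu,a)$ exactly, including the factor $4$ in the denominator.

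The main obstacle I anticipate is not the algebra but controlling the two approximations cleanly. Step $(c_{1})$ requires the cubic and higher Taylor terms of $r_{\rm u}^{(n)}$ to be negligible, which is the standard near-field validity condition (range below the Rayleigh distance yet above the uniform-power range); rather than bound the remainder I would simply state this condition. Step $(c_{2})$, the sum-to-integral replacement, is accurate when the quadratic phase varies slowly over one antenna spacing, i.e. for large $N$; making it rigorous would invoke an Euler--Maclaurin remainder estimate, but for the lemma it suffices to treat it as the usual continuous-aperture approximation. A secondary point requiring care is bookkeeping of the signs and the oddness of $C(\cdot)$ and $S(\cdot)$ when flipping the integration limits, since an error there would misplace the arguments $a\mu\pm\frac{N}{2\mu}$.
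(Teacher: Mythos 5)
Your proposal is correct and follows exactly the paper's intended route: the paper's own proof is a brief sketch invoking the same two approximations---$(c_1)$ Taylor expansion with the Fresnel (second-order) phase approximation, and $(c_2)$ replacement of the discrete sum by an integral---followed by the change of variables to $(\mu,a)$. Your write-up simply executes these steps in full detail (index-shift phase, completing the square, oddness of $C(\cdot)$ and $S(\cdot)$, and the $a=0$ denominator), and the algebra, including the factor $4$ and the arguments $a\mu\pm\frac{N}{2\mu}$, checks out.
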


\begin{proof}
	 Similar to \cite{zhang2023mixed}, we get the approximated expression by applying two approximations to both numerator and denominator of beam power ratio function $G(\theta_{\rm u}, r_{\rm u}, \phi) $. The first approximation, denoted as $(c_1)$, adopts Taylor expansion and Fresnel approximation. The second approximation, denoted as $(c_2)$, transforms the summation form into integral form. Last, by defining $\mu$ and $a$ in \eqref{Eq:def_mu_a}, we get the final result in \eqref{Eq:Gmua}.
\end{proof}
\begin{figure}
	\centering
	\vspace{-10pt}
	\subfigure[$\theta_{\rm u}=0$]{
		\label{Approximation 1}
		\includegraphics[width=0.48\linewidth]{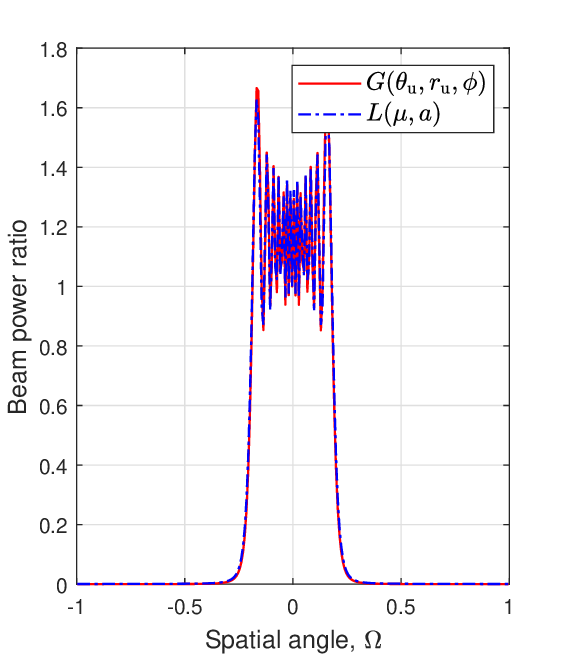}
	}\hspace{-10pt}
	\subfigure[$\theta_{\rm u}=-0.2$]{
		\label{Approximation 2}
		\includegraphics[width=0.48\linewidth]{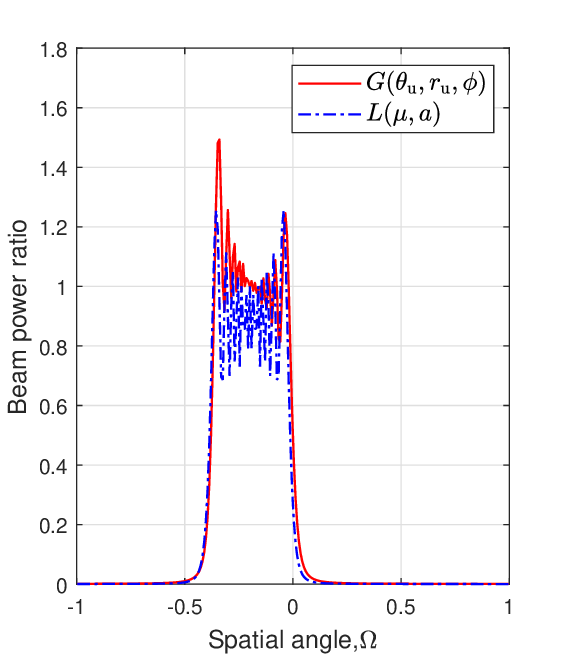}
	}
	
	\caption{The beam power ratio function $G(\theta_{\rm u}, r_{\rm u}, \phi)$ versus the approximation function $L(\mu, a)$. }
	\vspace{-10pt}
	\label{fig:fig_G_approximation}
\end{figure}
\begin{lemma}[Symmetricity of $L(\mu, a)$]\label{Le:Sym}\emph{Given $\theta_{\rm u}$ and $r_{\rm u}$, $L(\mu, a)$ is a function of $\phi$,  which is  symmetric function w.r.t. $\phi=\theta_{\rm u}$.}
\end{lemma}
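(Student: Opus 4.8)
The plan is to reduce this geometric symmetry statement to an elementary parity property of the Fresnel integrals. First I would observe that once $\theta_{\rm u}$ and $r_{\rm u}$ are fixed, the quantity $\mu=\sqrt{r_{\rm u}/(d(1-\theta_{\rm u}^2))}$ in \eqref{Eq:def_mu_a} is a constant, so the only dependence of $L(\mu,a)$ on $\phi$ enters through $a=\theta_{\rm u}-\phi$, and only through the numerator of \eqref{Eq:Gmua} (the denominator $4[C^2(N/(2\mu))+S^2(N/(2\mu))]$ is independent of $\phi$). Since $\phi=\theta_{\rm u}\pm\delta$ corresponds to $a=\mp\delta$, the assertion that $L$ is symmetric about $\phi=\theta_{\rm u}$ is exactly equivalent to $L(\mu,a)=L(\mu,-a)$, i.e.\ to $L$ being an even function of $a$. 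So it suffices to establish this evenness.

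Next I would invoke the fact that both Fresnel integrals are odd: because the integrands $\cos(\frac{\pi}{2}t^2)$ and $\sin(\frac{\pi}{2}t^2)$ are even in $t$, the definitions $C(x)=\int_0^x\cos(\frac{\pi}{2}t^2)\,{\rm d}t$ and $S(x)=\int_0^x\sin(\frac{\pi}{2}t^2)\,{\rm d}t$ give $C(-x)=-C(x)$ and $S(-x)=-S(x)$. Writing $p=a\mu$ and $q=N/(2\mu)$, the numerator of $L(\mu,a)$ is $[C(p+q)-C(p-q)]^2+[S(p+q)-S(p-q)]^2$. Replacing $a$ by $-a$ sends $p\mapsto-p$, so each argument flips sign; using oddness, $C(-p+q)-C(-p-q)=-C(p-q)+C(p+q)=C(p+q)-C(p-q)$, and identically for the $S$-terms. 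Both difference terms are therefore mapped to themselves, the numerator is invariant, and $L(\mu,-a)=L(\mu,a)$, which is the claimed symmetry.

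The only delicate point — and the step I would verify most carefully — is the bookkeeping of the four sign flips in $C(\pm p\pm q)$ and $S(\pm p\pm q)$, making sure the two difference terms are preserved exactly rather than merely up to an overall sign. In fact the conclusion is doubly safe here: since the numerator squares each difference, even a global sign change would leave $L$ unaltered, so the result is robust to any sign-tracking slip. No analytic estimates beyond the approximation already furnished by Lemma~\ref{Th: domiant_region_relation} are required; the symmetry is an exact algebraic consequence of the parity of the Fresnel integrals.
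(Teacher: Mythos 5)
Your proof is correct and follows essentially the same route as the paper's: both reduce the symmetry about $\phi=\theta_{\rm u}$ to evenness of $L(\mu,a)$ in $a$ via the linear substitution $a=\theta_{\rm u}-\phi$, and both establish that evenness from the oddness of the Fresnel integrals $C$ and $S$ by tracking the sign flips in the numerator. Your additional observations (justifying oddness from the even integrands, and noting that the squaring makes the conclusion robust to a global sign error) are nice touches but do not change the argument.
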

\begin{proof}
First, we show $L(\mu, a)$ is a symmetric function w.r.t. $a=0$, i.e., $L(\mu, a)=L(\mu, -a)$. 
It can be easily shown that as $C(x)$ and $S(x)$ are symmetrical w.r.t. $x=0$, i.e. $C(-x) = -C(x), S(-x) = -S(x)$. Thus, we have $C(-a\mu-\frac{N}{2\mu})=- C(a\mu+\frac{N}{2\mu})$, $C(-a\mu+\frac{N}{2\mu}) = -C(a\mu-\frac{N}{2\mu})$. Similarly, we have $S(-a\mu-\frac{N}{2\mu})=- S(a\mu+\frac{N}{2\mu})$, $S(-a\mu+\frac{N}{2\mu}) = -S(a\mu-\frac{N}{2\mu})$. 
As such, 
\vspace{-3pt}
\begin{equation}
	\begin{aligned}
		&L(\mu,-a) \\ 
		&\!=\!\frac{\left[\!C\!\l(\!\!-a \mu\!\!+\!\!\frac{N}{2\mu}\!\r)\!\!-\!C\!\l(\!\!-a\mu\!\!-\!\!\frac{N}{2\mu}\!\r)\!\right]^2\!\!\!\!+\!\!\left[\!S\!\l(\!\!-a \mu\!\!+\!\!\frac{N}{2\mu}\!\r)\!\!-\!\!S\!\l(\!\!-a \mu\!\!-\!\!\frac{N}{2\mu}\!\r)\!\right]^2\!}{4\ \left[C^2\l(\frac{N}{2\mu}\r) + S^2\l(\frac{N}{2\mu}\r)\right]}\\
		& \!= \!\frac{\left[\!C\!\l(\!a \mu\!+\!\frac{N}{2\mu}\r)\!\!-\!C\!\l(\!a \mu\!-\!\frac{N}{2\mu}\!\r)\!\right]^2\!\!+\!\left[\!S\!\l(\!a \mu\!+\!\frac{N}{2\mu}\r)\!\!-\!\!S\!\l(\!a \mu\!-\!\frac{N}{2\mu}\!\r)\!\right]^2\!}{4\ \left[C^2\l(\frac{N}{2\mu}\r) + S^2\l(\frac{N}{2\mu}\r)\right]}\\
		& = L(\mu,a),
	\end{aligned}
\end{equation}
that is, $L(\mu,a)$ is a symmetric function w.r.t. $a = 0$.
Next, from (\ref{Eq:def_mu_a}), we have $a = \theta_{\rm u}-\phi$, which is a linear function of $\phi$ given $\theta_{\rm u}$. Thereby, $L(\mu,a)$ is a function of $\phi$ and is symmetric w.r.t $\theta_{\rm u}-\phi=0$. Then, the function $L(\mu,a)$ is symmetric w.r.t $\phi = \theta_{\rm u}$, thus completing the proof. 
\end{proof}

In Fig.~\ref{fig:fig_G_approximation}, we plot the curves for both functions of  $G(\theta_{\rm u}, r_{\rm u}, \phi)$ and  $L(\mu, a)$ versus $\phi$, with $N=256$, and $d=\frac{\lambda}{2}$. First, it is observed that when $\theta_{\rm u}=0$, $L(\mu, a)$ is almost the same as $G(\theta_{\rm u}, r_{\rm u}, \phi)$, which indicates that the approximation in \eqref{Eq:Gmua} is highly accurate. Next, when $\theta_{\rm u}\neq 0$, the approximated beam power ratio $L(\mu, a)$ does not match $G(\theta_{\rm u}, r_{\rm u}, \phi)$ very well in the surrogate angular support. In particular, $G(\theta_{\rm u}, r_{\rm u}, \phi)$ is \emph{asymmetric} w.r.t. $\phi=\theta_{\rm u}$, while $L(\mu, a)$ is \emph{symmetric} w.r.t. $\phi=\theta_{\rm u}$ (as analytically shown in Lemma~\ref{Le:Sym}). However, one can observe that the surrogate angular support width based on $G(\theta_{\rm u}, r_{\rm u}, \phi)$ and its approximation $L(\mu, a)$ are quite similar.  This thus motivates us to obtain an approximated surrogate angular support width based on the more tractable function $L(\mu, a)$, which is given below.

\begin{proposition}\label{Prop:SASW} \emph{Consider near-field beam training for a user with  the user angle and range $\{\theta_{\rm u},r_{\rm u}\}$. Given a properly chosen $ \beta $ as in Remark 1 (e.g., $ \beta $=3 dB) such that $L(\mu, a)$ is monotonic w.r.t. parameter $a$,  the surrogate angular support can be approximated as
			\begin{equation}
				\widehat{\mathcal{A}}_{\beta{\rm dB}}(\theta_{\rm u}, r_{\rm u}) \approx[\theta_{\rm u}-a_0, \theta_{\rm u}+a_0],
			\end{equation}
			where $a_0>0$ satisfies $L(\mu, a_0)=10^{-\frac{\beta}{20}}$, and $\mu=\sqrt{\frac{r_{\rm u}}{d(1-\theta_{\rm u}^2)}}$. 
			The corresponding surrogate angular support width is approximated as $$\widehat{\Gamma}_{3 {\rm dB}}(\theta_{\rm u}, r_{\rm u})\approx2a_0.$$}
\end{proposition}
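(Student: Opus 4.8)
The plan is to reduce the set-valued definition of the surrogate angular support (Definition~\ref{Def:SAS}) to a one-dimensional threshold condition in the single variable $a=\theta_{\rm u}-\phi$, and then to combine the symmetry of $L$ established in Lemma~\ref{Le:Sym} with the assumed monotonicity to identify the resulting super-level set as a symmetric interval centred at $\phi=\theta_{\rm u}$. Concretely, since $g(\theta_{\rm u},r_{\rm u})=f(\tilde{\mathbf{w}}(\theta_{\rm u}),\theta_{\rm u},r_{\rm u})>0$, dividing the defining inequality $f(\tilde{\mathbf{w}}(\phi),\theta_{\rm u},r_{\rm u})>\kappa\,g(\theta_{\rm u},r_{\rm u})$ by $g$ and passing to the beam power ratio of \eqref{Eq:G} converts the membership condition into a single $\beta$dB threshold inequality on $G(\theta_{\rm u},r_{\rm u},\phi)$. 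Substituting the approximation $G(\theta_{\rm u},r_{\rm u},\phi)\approx L(\mu,a)$ from Lemma~\ref{Th: domiant_region_relation}, with $a=\theta_{\rm u}-\phi$ and $\mu=\sqrt{r_{\rm u}/[d(1-\theta_{\rm u}^{2})]}$, reduces the support to $\widehat{\mathcal{A}}_{\beta{\rm dB}}(\theta_{\rm u},r_{\rm u})\approx\{\phi : L(\mu,\theta_{\rm u}-\phi)>10^{-\beta/20}\}$, a condition depending on $\phi$ only through $a$.

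Next I would determine the shape of the one-dimensional super-level set $\{a : L(\mu,a)>10^{-\beta/20}\}$. By Lemma~\ref{Le:Sym} we have $L(\mu,a)=L(\mu,-a)$, so this set is symmetric about $a=0$, where $L$ attains its peak. Invoking the standing hypothesis that $\beta$ is chosen (per Remark~\ref{Remark:Proper beta}) so that $L(\mu,\cdot)$ is monotonically decreasing in $|a|$ over the relevant range, $L$ meets the threshold exactly once for $a>0$; let $a_{0}>0$ denote this unique crossing, i.e.\ the positive root of $L(\mu,a_{0})=10^{-\beta/20}$. Monotonicity together with symmetry then give $L(\mu,a)>10^{-\beta/20}\iff|a|<a_{0}$, so the solution set in $a$ is exactly $(-a_{0},a_{0})$. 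Reversing the linear substitution $a=\theta_{\rm u}-\phi$ maps this to $\phi\in(\theta_{\rm u}-a_{0},\theta_{\rm u}+a_{0})$, and intersecting with $[-1,1]$ yields $\widehat{\mathcal{A}}_{\beta{\rm dB}}(\theta_{\rm u},r_{\rm u})\approx[\theta_{\rm u}-a_{0},\theta_{\rm u}+a_{0}]$. Identifying the endpoints $\widehat{\Omega}_{\rm left}=\theta_{\rm u}-a_{0}$ and $\widehat{\Omega}_{\rm right}=\theta_{\rm u}+a_{0}$, definition \eqref{Eq:SASW} then gives $\widehat{\Gamma}_{\beta{\rm dB}}(\theta_{\rm u},r_{\rm u})=\widehat{\Omega}_{\rm right}-\widehat{\Omega}_{\rm left}=2a_{0}$, as claimed.

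The step I expect to be most delicate is not the monotonicity, which is granted by hypothesis, but the fidelity of replacing the true surrogate support, defined through $G$, by the symmetric interval obtained from $L$. As the discussion around Fig.~\ref{fig:fig_G_approximation} emphasizes, $G(\theta_{\rm u},r_{\rm u},\phi)$ is genuinely \emph{asymmetric} about $\phi=\theta_{\rm u}$ when $\theta_{\rm u}\neq0$, whereas $L(\mu,a)$ is exactly symmetric, so the two endpoints are individually only approximately placed symmetrically about $\theta_{\rm u}$. What must therefore be argued is that the \emph{width} $\widehat{\Omega}_{\rm right}-\widehat{\Omega}_{\rm left}$ is nevertheless well captured by $2a_{0}$, i.e.\ that the left- and right-endpoint deviations of $G$ from symmetry largely cancel in their difference. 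I would justify this cancellation by the close agreement of the two support widths observed numerically in Fig.~\ref{fig:fig_G_approximation}, which is precisely the evidence the paper relies on to legitimize using the tractable symmetric $L$ in place of $G$ for the width computation.
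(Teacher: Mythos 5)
Your proposal is correct and takes essentially the same approach as the paper's proof: approximate the beam power ratio $G$ by $L(\mu,a)$ via Lemma~\ref{Th: domiant_region_relation}, combine the symmetry of $L$ (Lemma~\ref{Le:Sym}) with the assumed monotonicity to obtain a unique positive crossing $a_0$ of the $\beta$dB threshold, and conclude that the surrogate support is the symmetric interval $[\theta_{\rm u}-a_0,\theta_{\rm u}+a_0]$ of width $2a_0$. Your write-up is somewhat more explicit than the paper's---in reducing the set-valued definition to a super-level set of $L$ and in flagging that the asymmetry of $G$ about $\phi=\theta_{\rm u}$ makes only the width (not the individual endpoints) reliably captured---but these are elaborations of the same argument, resting on the same numerical evidence (Fig.~\ref{fig:fig_G_approximation}) that the paper itself invokes.
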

\vspace{8pt}
\begin{proof}
		First, when $  \beta $ is properly chosen (e.g., $\beta=3$ dB)  such that $ L(\mu, a) $ is monotonic w.r.t. parameter $ a $,  it can be verified by enumerating all possible $a>0$ that there exists one and only one $a_0$ satisfying $L(\mu, a_0) = 10^{-\frac{\beta}{20}}$.
		Moreover, $L(\mu,a)$ is symmetrical w.r.t. $a=0$. Thus, for $a_0>0$ satisfying $L(\mu, a_0) = 10^{-\frac{\beta}{20}}$, we always have $L(\mu,- a_0) = 10^{-\frac{\beta}{20}}$.
		Next, $L(\mu,a)$ is an approximation of the beam power ratio function  $G(\theta_{\rm u}, r_{\rm u}, \phi)$, so that the obtained surrogate angular support width $\widehat{\Gamma}_{3 {\rm dB}}(\theta_{\rm u}, r_{\rm u})\approx2a_0$, thus completing the proof.
\end{proof}
It is worth noting that it is intractable to determine an optimal $\beta$ that can guarantee the monotonicity of function $ L(\mu, a) $ w.r.t. parameter $a$ as well as achieve the best performance. 
	Thus, we resort to a commonly used dominant power-ratio criterion, i.e., $ \beta=3 $ dB, which shall be numerically verified to attain high joint estimation accuracy and achievable rate. 
	Moreover, it is worth noting that 3 dB power-ratio criterion have been widely used in the literature for determining the angular/range interval with dominant power, such as the 3 dB beam width \cite{10304223} and 3 dB beam depth \cite{10273772}; which is thus also selected in this paper.
Proposition~\ref{Prop:SASW} indicates  that the surrogate angular support width is jointly determined by the user angle, user range, and the antenna spacing. 			

\begin{corollary}\label{Cor:rRayl}\emph{When $r_{\rm u}\to Z_{\rm Rayl}$, we have $\widehat{\Gamma}_{\beta {\rm dB}}(\theta_{\rm u}, r_{\rm u})  < \frac{2\lambda}{N}$.}
\end{corollary}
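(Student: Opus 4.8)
\emph{Proof plan.} The plan is to push the Rayleigh-distance value of $r_{\rm u}$ through the semi-closed characterization of Proposition~\ref{Prop:SASW} and then bound the resulting half-width $a_0$ against the DFT spatial-angle resolution. Recall that Proposition~\ref{Prop:SASW} gives $\widehat{\Gamma}_{\beta{\rm dB}}(\theta_{\rm u},r_{\rm u})\approx 2a_0$, where $a_0>0$ is the unique root of $L(\mu,a_0)=10^{-\frac{\beta}{20}}$ with $\mu=\sqrt{r_{\rm u}/[d(1-\theta_{\rm u}^2)]}$, and that for the chosen $\beta$ the map $a\mapsto L(\mu,a)$ decreases monotonically on $a>0$ from $L(\mu,0)=1$. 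First I would evaluate $\mu$ at $r_{\rm u}=Z_{\rm Rayl}=2D^2/\lambda$: substituting $d=\lambda/2$ and $D\approx N\lambda/2$ cancels every wavelength factor and yields the clean $\mu=N/\sqrt{1-\theta_{\rm u}^2}$, which is exactly the largest value $\mu$ takes over the near field. Hence $\frac{N}{2\mu}=\frac12\sqrt{1-\theta_{\rm u}^2}\le\frac12$, keeping the denominator arguments of $L$ in the small-to-moderate range.

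Because $a\mapsto L(\mu,a)$ is strictly decreasing and $a_0$ solves $L(\mu,a_0)=10^{-\frac{\beta}{20}}$, the target bound $\widehat{\Gamma}_{\beta{\rm dB}}\approx 2a_0<\tfrac{2}{N}$ reduces to the single inequality $L(\mu,\tfrac{1}{N})<10^{-\frac{\beta}{20}}$ at the candidate endpoint $a=\tfrac1N$ (the DFT spatial-angle grid spacing, which the cancelled wavelength factors identify with the corollary's $\tfrac{2\lambda}{N}$ bound on the width). The crucial structural observation is that at the Rayleigh distance the Fresnel arguments lose their $N$-dependence: with $\mu=N/\sqrt{1-\theta_{\rm u}^2}$ and $a=\tfrac1N$ one gets $a\mu=1/\sqrt{1-\theta_{\rm u}^2}$ and $\frac{N}{2\mu}=\frac12\sqrt{1-\theta_{\rm u}^2}$, so $L(\mu,\tfrac1N)$ depends on $\theta_{\rm u}$ alone and $a_0$ is forced to scale as $1/N$. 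I would then plug these $O(1)$ arguments into \eqref{Eq:Gmua} and bound the Fresnel integrals: in the denominator use $C(x)\approx x$, $S(x)\approx\frac{\pi}{6}x^3$ valid for $\frac{N}{2\mu}\le\frac12$, and in the numerator use the mean-value estimates $C(u+v)-C(u-v)=2v\cos(\tfrac{\pi}{2}\xi^2)+\cdots$ (and likewise for $S$) to control the gap. A representative check at $\theta_{\rm u}=0$, where $\mu=N$ and $\frac{N}{2\mu}=\frac12$, already gives $L\approx0.41$, comfortably below $10^{-3/20}\approx0.708$; this quantitative gap is the heart of the corollary.

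The hard part will be making this Fresnel inequality rigorous and, more importantly, \emph{uniform} in $\theta_{\rm u}$, because the closed-form surrogate $L$ of Lemma~\ref{Th: domiant_region_relation} degrades toward endfire $|\theta_{\rm u}|\to 1$ (there $\mu\to\infty$ and the summation-to-integral step $(c_2)$ loosens), so the naive small-argument bounds no longer pin down the numerator term by term. To close this I would either restrict $\theta_{\rm u}$ away from endfire—consistent with the regime the surrogate is meant to serve—or revert to the exact beam power ratio $G(\theta_{\rm u},r_{\rm u},\phi)$, whose $\beta$dB width tends to the far-field Dirichlet-kernel $3$dB beamwidth $\approx 1.77/N<2/N$ as the range grows. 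Finally I would invoke Observation~\ref{OB}: since the surrogate width is monotonically decreasing in $r_{\rm u}$, the Rayleigh distance is precisely the worst case, so the single-point estimate certifies the stated bound in the limit $r_{\rm u}\to Z_{\rm Rayl}$.
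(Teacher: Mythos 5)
Your proposal is correct in substance, but its primary route is genuinely different from the paper's. The paper never touches the Fresnel surrogate $L(\mu,a)$ in this proof: it simply notes that as $r_{\rm u}\to Z_{\rm Rayl}$ the steering vector $\mathbf{b}(\theta_{\rm u},r_{\rm u})$ degenerates to the far-field vector $\mathbf{a}(\theta_{\rm u})$, so the exact power ratio $G(\theta_{\rm u},r_{\rm u},\phi)$ collapses to a squared Dirichlet kernel whose main lobe has null-to-null width $2\lambda/N$; since the $\beta$dB surrogate support is a strict threshold set with $\kappa>0$, it lies inside the main lobe and its width is below $2\lambda/N$. That is precisely your ``fallback''---in the paper it is the entire proof. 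Your primary route instead stays inside the framework of Lemma~\ref{Th: domiant_region_relation} and Proposition~\ref{Prop:SASW}: evaluate $\mu=N/\sqrt{1-\theta_{\rm u}^2}$ at the Rayleigh distance, use monotonicity of $L$ in $a$ to reduce the claim to the single inequality $L(\mu,1/N)<10^{-\beta/20}$, and verify it with explicit margin ($\approx 0.41$ versus $0.708$ at broadside). What each buys: the paper's limit argument is shorter and automatically uniform in $\theta_{\rm u}$ (the limiting Dirichlet kernel no longer depends on the angle), sidestepping the endfire degradation of $L$ that you correctly identify as the hard part; your argument is more quantitative, showing the half-width is pinned to the $1/N$ scale with slack, but it is incomplete without either the endfire restriction or reverting to the exact $G$, i.e., the paper's own argument.

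Two cautions. First, units: the paper's kernel is written with phases $e^{-\jmath\frac{2\pi}{\lambda}n\Delta_{\theta_{\rm u}}}$, which is inconsistent with the dimensionless spatial-angle convention $e^{-\jmath\pi n\theta_n}$ used everywhere else in the paper; under the consistent convention the null-to-null width is $4/N$, so your target $2/N$ is strictly stronger than what the paper's ``inside the main lobe'' step yields, and it genuinely requires a 3dB-beamwidth estimate (the Dirichlet 3dB width $\approx 1.77/N<2/N$), which your Fresnel computation in effect supplies. Second, your closing appeal to Observation~\ref{OB} is inverted: the width \emph{decreases} with range, so the Rayleigh distance is the smallest-width (most favorable) point, not the worst case; this is harmless, since the corollary only concerns $r_{\rm u}\to Z_{\rm Rayl}$, but the monotonicity step adds nothing to the proof.
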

\begin{proof}
As $r_{\rm u}$ approaches $Z_{\rm Rayl}$, the near-field channel reduces into the far-field channel. As such, we have  channel steering vector $\mathbf{b}^{H}(\theta_{\rm u},r_{\rm u}) \to$  $\mathbf{a}^H(\theta_{\rm u})$. Upon substituting $\mathbf{a}^H(\theta_{\rm u})$ into \eqref{Eq:G}, the beam power ratio function $G$ becomes
\begin{align}
	G(\theta_{\rm u},r_{\rm u},\phi) &\approx \frac{\left | \mathbf{a}^H(\theta_{\rm u})\tilde{\mathbf{w} }(\phi )   \right |^2 }{\left | \mathbf{a}^H(\theta_{\rm u})\tilde{\mathbf{w} }(\theta_{\rm u} )  \right |^2 }= \left | \frac{1}{N} \mathbf{a}^H(\theta_{\rm u})\tilde{\mathbf{w} }(\phi ) \right |^2  \nn\\& =   \left | \frac{1}{N}  \sum_{n\in \mathcal{N} }^{} e^{-\jmath\frac{2\pi}{\lambda }n(\theta_{\rm u}-\phi) }  \right |^2 \\
	&\stackrel{(c_3)}{=}\left | \frac{\sin(\frac{\pi}{\lambda }N\Delta_{\theta_{\rm u}} ) }{N\cdot \sin(\frac{\pi }{\lambda } \Delta_{\theta_{\rm u}})}  \right | ^2,
\end{align}
where $(c_3)$ is obtained by letting $\Delta_{\theta_{\rm u}} = \theta_{\rm u}-\phi$ and applying the sum formula of geometric series.

The term $\frac{\sin(\frac{\pi}{\lambda }N\Delta_{\theta_{\rm u}} ) }{N\cdot \sin(\frac{\pi }{\lambda } \Delta_{\theta_{\rm u}})} $ corresponds to the Dirichlet kernel function \cite{li2023nearfield}. It is equal to $0$ when $\frac{\pi}{\lambda }N\Delta_{\theta_{\rm u}} = \pm \pi$, e.g., $\Delta_{\theta_{\rm u}} = \pm \frac{\lambda}{N}$.
Therefore, the null to null beam width of the main lobe is $ \frac{2\lambda}{N}$. 
Consequently, the width of the $\beta$dB surrogate angular support is smaller than $\frac{2\lambda}{N}$, thus completing the proof.
\end{proof}

Corollary~\ref{Cor:rRayl} shows that when the user range is larger than the Rayleigh distance, there is a very small (non-zero) angular support width since it reduces to the far-field scenario and the energy-spread effect disappears in this region.

In Fig.~\ref{fig:width_versus_distance_angle}, we numerically show the surrogate angular support width versus the user range for different user spatial angles. It is observed that $\widehat{\Gamma}_{3{\rm dB}}(\theta_{\rm u}, r_{\rm u})$ is generally decreasing with the user range and eventually converges to a very small value when the user is sufficiently far away from the XL-array. More specifically, the decreasing rate of $\widehat{\Gamma}_{3{\rm dB}}(\theta_{\rm u}, r_{\rm u})$ is fluctuating in the range domain. Generally, $\widehat{\Gamma}_{3{\rm dB}}(\theta_{\rm u}, r_{\rm u})$ drops quickly when the user range is small and slowly vice versa. 
However, in some \emph{range-non-sensitive} region (e.g., the user range from 18 m to 23 m for $\theta_{\rm u}=0$, $N=256$, and $f=30$ GHz as shown in Fig.~\ref{fig:width_versus_distance_angle} ), the surrogate angular support width decreases very slowly. It is worth noting that these\textit{ range-non-sensitive} regions are dependent on the antenna number $N$ and carrier frequency $f$.
Last, given the same user range, $\widehat{\Gamma}_{3{\rm dB}}(\theta_{\rm u}, r_{\rm u})$ decreases with the (absolute) user spatial angle, which is consistent with the results in Fig.~\ref{fig:width_distance}.
\begin{figure}
	\centering
	\vspace{-10pt}
	\includegraphics[width=8.5cm]{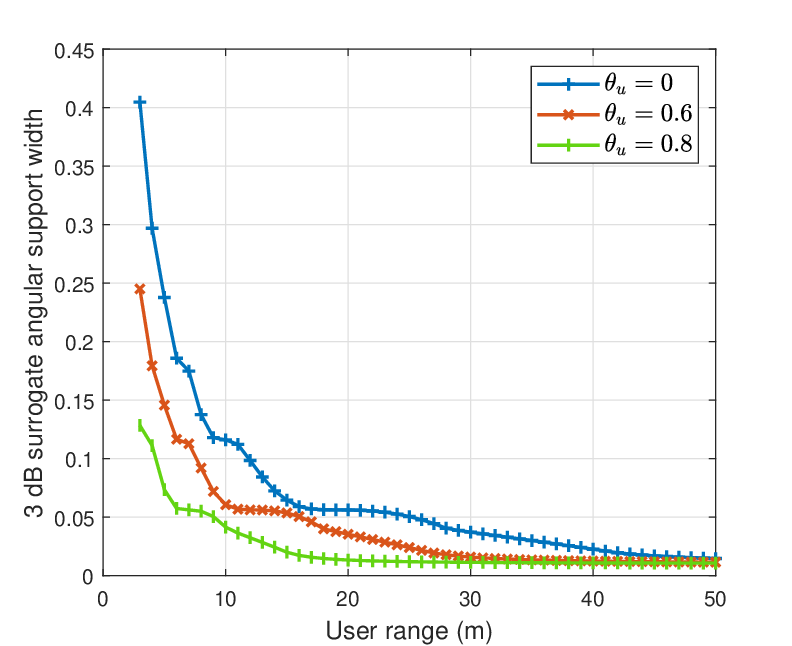}
	\caption{3dB surrogate angular support width versus user range for three user angles, where $N=256$ and $f=30$ GHz.}
	\vspace{-10pt}
	\label{fig:width_versus_distance_angle}
\end{figure}

\vspace*{-5pt}
\section{Proposed joint angle and range estimation with DFT codebook}
In this section, we propose two efficient schemes to jointly estimate the user angle and range with conventional DFT codebook. Specifically, the first scheme leverages the surrogate angular support width to estimate the user range, while the second scheme directly utilizes the received power ratios in the surrogate angular support for user range estimation.
\vspace*{-5pt}
\subsection{Scheme 1: Angle Support Width based Joint Angle and Range Estimation}
In Section~\ref{Sec:beampattern}, we show that the surrogate angular support width is a function of the user angle and range. This thus motivates us to first estimate the user angle and then leverage the surrogate angular support width to resolve the user range, which is termed as \emph{angle support width based joint angle and range estimation} (ASW-JE) scheme. To this end, we first obtain a semi-closed form expression for the user range according to the surrogate angular support width.

\begin{corollary}\label{Pro:scheme1}\emph{Based on Lemma~\ref{Th: domiant_region_relation}, when given $\theta_{\rm u}$ and $\phi$, the user range can be resolved from the $\beta$dB surrogate angular support width, which is given by
\begin{equation}\label{Eq:scheme1_estimate_range}
r_{\rm u}\approx \mu^2_0~ d(1-\theta_{\rm u}^2),
\end{equation}
where $\mu_0$ satisfies $L(\mu_0, a)=10^{-\frac{\beta}{20}}$, and $a=\theta_{\rm u}-\phi$.}
\end{corollary}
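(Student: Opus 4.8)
The plan is to recognize that this corollary is simply the \emph{inverse} of Proposition~\ref{Prop:SASW}: there the pair $\{\theta_{\rm u}, r_{\rm u}\}$ (hence $\mu$) is given and the half-width $a_0$ is the unknown solved from the boundary equation $L(\mu, a_0) = 10^{-\beta/20}$; here the boundary angle $\phi$ (hence $a = \theta_{\rm u} - \phi$) is treated as the measured quantity and $\mu$ becomes the unknown. I would therefore reuse the boundary characterization already established in Proposition~\ref{Prop:SASW} and merely swap the roles of the two arguments of $L$.

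First, I would recall from Lemma~\ref{Th: domiant_region_relation} that the beam power ratio at an arbitrary scanning angle $\phi$ is $G(\theta_{\rm u}, r_{\rm u}, \phi) \approx L(\mu, a)$, with $\mu = \sqrt{r_{\rm u}/[d(1-\theta_{\rm u}^2)]}$ and $a = \theta_{\rm u} - \phi$. By Proposition~\ref{Prop:SASW}, an endpoint $\phi$ of the surrogate angular support satisfies the boundary equation $L(\mu, a) = 10^{-\beta/20}$. Since the estimated user angle $\theta_{\rm u}$ together with the observed boundary $\phi$ fixes the value $a = \theta_{\rm u} - \phi$, this boundary equation becomes a single-variable equation in the unknown $\mu$. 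Denoting its solution by $\mu_0$ and inverting the definition $\mu_0 = \sqrt{r_{\rm u}/[d(1-\theta_{\rm u}^2)]}$ immediately yields $r_{\rm u} \approx \mu_0^2\, d(1-\theta_{\rm u}^2)$, which is exactly the claimed expression.

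The crux—and the step I expect to carry the real content—is justifying that the equation $L(\mu_0, a) = 10^{-\beta/20}$ admits a \emph{unique} solution $\mu_0$, so that the recovered range is well defined. This requires that, for the fixed boundary value of $a$, the map $\mu \mapsto L(\mu, a)$ be strictly monotone over the near-field range of interest. I would argue this by appealing to the monotonic decrease of the surrogate angular support width with the user range established around Proposition~\ref{Prop:SASW} and visualized in Fig.~\ref{fig:width_versus_distance_angle}: because $\mu$ is strictly increasing in $r_{\rm u}$ and the half-width $a_0(\mu)$ solving $L(\mu, a_0) = 10^{-\beta/20}$ is strictly decreasing in $\mu$, the map $\mu \mapsto a_0$ is a bijection whose inverse furnishes $\mu_0$ from the measured $a$. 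The delicate point is that this monotonicity holds only for the properly chosen $\beta$ (e.g.\ $3$ dB) of Remark~\ref{Remark:Proper beta}: I would flag that the same $\beta$ which guarantees a single, well-separated surrogate support is precisely what makes the inversion unambiguous. A fully analytic monotonicity proof would require differentiating the Fresnel-integral form of $L(\mu, a)$ with respect to $\mu$, whereas the numerical verification in Fig.~\ref{fig:width_versus_distance_angle} suffices to support the semi-closed-form claim stated here.
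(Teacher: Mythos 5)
Your proposal is correct and follows essentially the same route as the paper's own proof: invoke Lemma~\ref{Th: domiant_region_relation} to replace $G$ by $L(\mu,a)$, observe that fixing $\theta_{\rm u}$ and $\phi$ fixes $a$ so that the boundary condition $L(\mu_0,a)=10^{-\beta/20}$ becomes a one-variable equation in $\mu$, and invert the definition of $\mu$ to recover $r_{\rm u}\approx\mu_0^2\,d(1-\theta_{\rm u}^2)$. Your added discussion of uniqueness via monotonicity is consistent with, and slightly more explicit than, the paper's justification, which likewise rests on numerical verification (``enumerating all possible $\mu$,'' cf.\ Fig.~\ref{level}) rather than an analytic monotonicity proof.
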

\begin{proof}
{First, it is observed from \eqref{Eq:Gmua} that $G(\theta, r, \phi)$ reduces to $L(\mu, a)$ by replacing $(\theta,r,\psi)$ with $(\mu,a)$. Second, given $\theta$ and $\psi$ (i.e., fixed $a$), $L(\mu, a)$ is now depends on $\mu$ only. Next, by enumerating all possible $\mu$, it is verified that we can obtain a $\mu_0$ that satisfies $L(\mu_0, a)=10^{-\frac{\beta}{20}}$. Then, the user range can be resolved from the relation between $\mu_0,d,\theta_{\rm u}$, as in \eqref{Eq:scheme1_estimate_range}, thus completing the proof.}
\end{proof}

\begin{remark}[Received beam pattern under DFT codebook]\label{Rem:DFTBeam}\emph{Note that unlike the received beam pattern for arbitrary far-field beamformers (i.e., ${\tilde{\mathbf{w}}}(\theta_n)$), the received beam pattern under the DFT codebook $\tilde{\mathbf{W}}_{\rm{DFT}}$ only consists of a finite number of \emph{sampled} normalized beam gains at selected sampled angles. Therefore, there may not exist an exactly $3$dB surrogate angular support for the case of DFT codebook. However, with the calculated normalized beam gains, we can choose a proper $\beta\approx 3$ to obtain the $\beta$dB surrogate angular support and estimate the user range according to Corollary~\ref{Pro:scheme1}.}
\end{remark}

Based on Corollary~\ref{Pro:scheme1} and Remark~\ref{Rem:DFTBeam}, we propose an efficient scheme to jointly estimate the user angle and range by leveraging the surrogate angular support, which  consists of the following three procedures. 
\subsubsection{{\underline{\textbf{Beam sweeping}}}} Similar to the far-field beam training procedure \cite{Cui2022channel}, the XL-array BS sequentially
sends $N$ training symbols, while it dynamically tunes beam directions (specified by  beam codewords) according to the predefined DFT codebook $\tilde{\mathbf{W}}_{\rm{DFT}}$. For each time, the received signal power at the user is given by 
\begin{equation}
p_n(\tilde{\mathbf{w}}_n) = | \sqrt{N}h_{\rm u} \mathbf{b}^{H}(\theta_{\rm u},r_{\rm u})\tilde{\mathbf{w}}_nx+z_n|^2, ~~~\forall n\in\mathcal{N}.
    \end{equation}
\subsubsection{\underline{\textbf{Joint angle and range estimation}}}
With the received signal powers $\{p_n(\tilde{\mathbf{w}}_n), \forall n\in\mathcal{N}\}$, the user angle and range can be sequentially estimated as follows based on the  angular support and its surrogate version.
\begin{itemize}
\item {\textbf{Angle estimation:}}
With $\{p_n(\tilde{\mathbf{w}}_n),\forall n\in\mathcal{N}\}$, the user obtains the index set of codewords for which it receives sufficiently high signal power (i.e., within the angular support), which is given by 
	\begin{equation}\label{Eq:B}
	\mathcal{B} = \{n |  p_n(\tilde{\mathbf{w}}_n)  > \kappa^2  \max_{\tilde{\mathbf{w}}_n} p_n(\tilde{\mathbf{w}}_n) \},
	\end{equation}
	where $\kappa^2\approx 0.5$ and its specific value is determined by the received beam pattern (see Remark~\ref{Rem:DFTBeam}). 
Note that for the angular support, $\mathcal{B}$ is obtained from the received signal powers after beam sweeping, while for ease of analysis, $\mathcal{A}_{\beta{\rm dB}}(\theta_{\rm u}, r_{\rm u})$ in \eqref{Eq:AdB} is obtained from the normalized beam gain without taking into account the effects of path-loss and received noise.  Then, the index of  estimated user angle in the DFT codebook is obtained as $\bar{n}= \lceil\rm{Med}(\mathcal{B})\rceil$ and the user angle is estimated as 
\begin{equation}
\bar{\theta}_{\rm u}^{(1)}=\theta_{\bar{n}}.\label{Eq:AngEst}
\end{equation}
\item {\textbf{Range estimation:}} Given the estimated user angle $\bar{\theta}_{\rm u}^{(1)}$, the surrogate angular support is firstly constructed. To this end, we  introduce a \emph{received power ratio} below
\begin{align}\label{Power ratio G}
	\eta_{\bar{n},n} &= \frac{p_n(\tilde{\mathbf{w}}_n)}{p_n(\tilde{\mathbf{w}}_{\bar{n}})} = \frac{\left|\sqrt{N} h_{\rm u} \mathbf{b}^{H}(\theta_{\rm u}, r_{\rm u}) \tilde{\mathbf{w}}_n x+z_n\right|^2}{\left|\sqrt{N} h_{\rm u} \mathbf{b}^H(\theta_{\rm u}, r_{\rm u}) \tilde{\mathbf{w}}_{\bar{n}} x+z_{\bar{n}}\right|^2} \nn\\&= \frac{\left|\mathbf{b}^{H}(\theta_{\rm u}, r_{\rm u}) \tilde{\mathbf{w}}_n+z^{\prime }_n\right|^2}{\left| \mathbf{b}^H(\theta_{\rm u}, r_{\rm u}) \tilde{\mathbf{w}}_{\bar{n}} +z^{\prime }_{\bar{n}}\right|^2}\nn\\&\overset{(c_4)}{\approx}\frac{\left|\mathbf{b}^{H}(\theta_{\rm u}, r_{\rm u}) \tilde{\mathbf{w}}_n\right|^2}{\left| \mathbf{b}^H(\theta_{\rm u}, r_{\rm u}) \tilde{\mathbf{w}}_{\bar{n}} \right|^2}\overset{(c_5)}{\approx} G(\theta_{\rm u}, r_{\rm u}, \theta_n),\forall n,
\end{align}
where $z_n^{\prime } = \frac{z_n}{\sqrt{N}h_{\rm u}}$, $z^{\prime }_{\bar{n}}= \frac{z_{\bar{n}}}{\sqrt{N}h_{\rm u}}$, the approximation in $(c_4)$ holds when $z_n^{\prime }=0$ and $z^{\prime }_{\bar{n}}=0$,  and the approximation in $(c_5)$ holds when $\theta_{\bar{n}}=\theta_{\rm u}$. Hence, in the range estimation, $\eta_{\bar{n},m}$ can be treated as an indicator/approximation for the beam power ratio $G(\theta_{\rm u}, r_{\rm u}, \theta_n)$. Then, the user  determines  $\phi=\theta_{m}$ to construct the surrogate angular support such that  
\begin{equation}\label{Eq:recpowratio}
\eta_{\bar{n},m}=\frac{p_n(\tilde{\mathbf{w}}_m)}{p_n(\tilde{\mathbf{w}}_n)}\approx  \frac{1}{2}.
\end{equation}
As such, by using Corollary~\ref{Pro:scheme1}, the user range can be  estimated  according to 
\begin{equation}\label{Eq:RanEst}
\bar{r}_{\rm u}^{(1)}=\mu^2_0 d(1-(\bar{\theta}^{(1)}_{\rm u})^2),
\end{equation}
where $\mu_0$ satisfies $L(\mu_0, a)=\eta_{\bar{n},m}$ and $a=\bar{\theta}^{(1)}_{\rm u}-\theta_{m}$. Note that $L(\mu, a)$ is a highly complicated function of $\mu$, hence the optimal solution to $L(\mu, a)=\eta_{\bar{n},m}$ can be obtained by applying an exhaustive search over $\mu$ in the set $\mathcal{Z}_{\mu}=\left\{\mu \mid \mu=\mu_{\min}, \mu_{\min }+\Delta \mu,  \mu_{\max }\right\}$, where $\mu_{\min}$, $\mu_{\max }$, and $\Delta \mu$ denotes the lower bound, upper bound and searching step size, respectively.
\end{itemize}

\begin{figure}[t] 
	\centering  
	\vspace{-0.5cm} 
	\subfigure[case 1 ]{
		\label{G versus $mu$ 1}
		\includegraphics[width=0.45\linewidth]{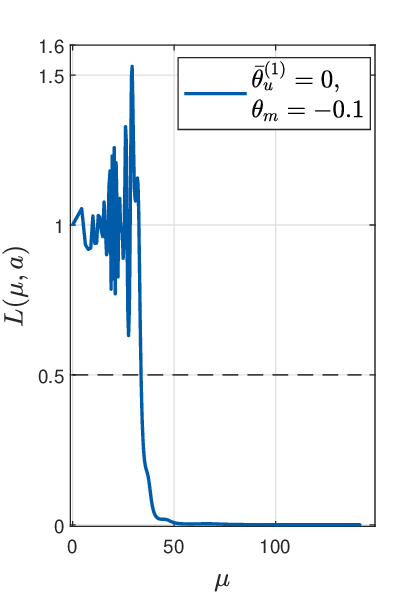}} \vspace{-10pt}
	\subfigure[case 2]{
		\label{G versus $mu$ 3}
		\includegraphics[width=0.45\linewidth]{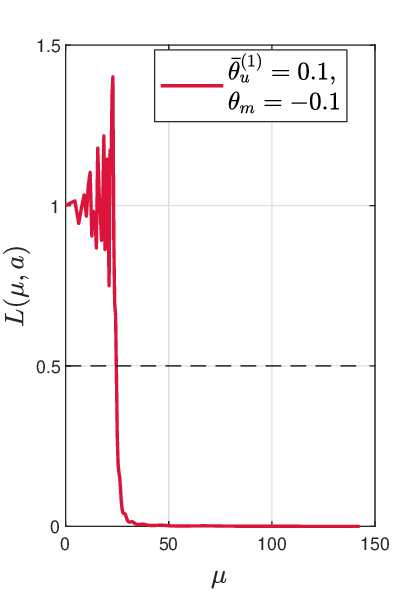}}
	\caption{Illustration of function $L(\mu, a)$ versus $\mu$, where the case 1 is $a = \bar{\theta}^{(1)}_{\rm u}-\theta_{m} = 0.1$, and the case 2 is $a = \bar{\theta}^{(1)}_{\rm u}-\theta_{m} = 0.2$. }
	\vspace{-5pt}
	\label{level}
\end{figure}

In Fig~\ref{level}, we plot the curves of function $L(\mu, a)$ versus $\mu$ for different $a$, where $N=256$, and $f=30$ GHz. It is observed that for different $a$, there is one specific $\mu_0$ ensuring $L(\mu_0, a)=0.5$. 
The above range estimation method shows a surprising  result that the DFT codebook, conventionally designed for angle estimation in the far-field, can also be used for estimating the user range in the near-field by effectively exploiting the near-field energy-spread effect.

\begin{remark}[Angle estimation error]\label{Re:angle_error}\emph{In practice, the accuracy of angle estimation in \eqref{Eq:AngEst} is affected by several factors. First, the estimated angle is obtained from  received signal powers in different time and hence is affected by the received noise (or equivalently received SNR). Thus, it is expected that the angle estimation is more accurate in the high-SNR regime (which shall be numerically verified in Section~\ref{SimuR}). Second, the angle estimation relies on an on-grid estimation method based on the DFT codebook. Therefore, its estimation accuracy is generally determined by the angle resolution of  DFT codebook.  A higher-resolution  DFT codebook in general leads to a more accurate angle estimation.}
\end{remark}
\begin{remark}[Range estimation error] \label{Re:range_error}\emph{It can be inferred from \eqref{Eq:RanEst} that  the range estimation error generally arises from two factors: 1) angle estimation error;  and 2) the error in obtaining $\mu_0$. On one hand, the range estimation requires the information of estimated user angle, and thus is significantly affected by the angle estimation error. Such error propagation will be more severe when the received SNR becomes lower. On the other hand, even with accurate estimated angle, the range estimation still hinges on the acquisition of $\mu_0$. Since $\mu_0$ is obtained by numerically solving the equation  $L(\mu_0, a)=\eta_{\bar{n},m}$, there is an additional approximation error in the low-SNR regime due to $L(\mu_0, a)\approx \eta_{\bar{n},m}$ in \eqref{Power ratio G}. 
Moreover, when the SNR is not sufficiently high, there may exist multiple solutions to \eqref{Eq:RanEst} for some cases. 
This is expected since there are some range-non sensitive regions as shown in Fig.~\ref{fig:width_versus_distance_angle}, where the surrogate angular support width does not reduce too much when the user range increases; thus, high-power noise may cause wrong estimation for $\mu$ in these regions.
}
\end{remark}

\subsubsection{\underline{\textbf{Beam determination}}}\label{SK1Sec:BeamDeter}
After the joint angle and range estimation with  DFT codebook, the user feedbacks $\{\bar{\theta}^{(1)}_{\rm u}, \bar{r}_{\rm u}^{(1)}\}$ to the XL-array BS. Then, we can apply a \emph{one dimensional (1D)-off-grid polar-domain codeword} approach to select the beam codeword for data transmission, as elaborated below.
First, we assume that the range of the polar-domain codeword can be flexibly controlled with an arbitrary value. 
Thus, the angle of the polar-domain codeword is on-grid, while the range of the polar-domain codeword is off-grid.
Therefore, given $\{\bar{\theta}^{(1)}_{\rm u}, \bar{r}_{\rm u}^{(1)}\}$, it can be easily obtained that the optimal XL-array beamforming vector for data transmission is $\mathbf{v}=\mathbf{b}(\bar{\theta}^{(1)}_{\rm u}, \bar{r}_{\rm u}^{(1)}).$

\begin{remark}[Improved scheme: Middle-$K$ angle selection]\label{Re:SelectionK}
	\emph{To further improve the accuracy of user angle and range estimation, one effective way is applying the \emph{middle-K angle selection scheme} proposed in \cite{Zhang2022fast}, which selects $K>1$ candidate user angles in the middle of  angular support rather than selecting one estimated angle only. Then, for each candidate angle $\bar{\theta}_{{\rm u},k}^{(1)}$, its corresponding candidate user range $\bar{r}_{{\rm u},k}^{(1)}$ can be estimated by following the similar procedure in \eqref{Eq:RanEst}. With all candidate user angles and ranges $\{\bar{\theta}_{{\rm u},k}^{(1)}, \bar{r}_{{\rm u},k}^{(1)}, \forall k\}$, another round of beam sweeping needs to be performed to select the best pair of candidate user angle and range. For example, for the data transmission scheme based on 1D-off-grid polar-domain codeword, $K$ training symbols should be sent to the user, while the XL-array dynamically changes its beamforming vectors according to $\mathbf{v}_k\!=\!\mathbf{b}(\bar{\theta}^{(1)}_{{\rm u}, k}, \bar{r}_{{\rm u}, k}^{(1)}), k=1,\!\cdots,\! K$. Finally, the best polar-domain beamforming for data transmission is obtained by comparing the received signal powers, i.e., $\mathbf{v}=\mathbf{b}(\bar{\theta}^{(1)*}_{\rm u}, \bar{r}_{\rm u}^{(1)*})=\arg\max_{\mathbf{v}_k}\{p_k(\mathbf{v}_k)\}.$ }
\end{remark}

\vspace*{-5pt}
\subsection{Scheme 2: Power-Ratio MSE based Joint Angle and Range Estimation}

The performance of the ASW-JE scheme is critically determined by the estimation accuracy of the (surrogate) angular support width, which may suffer considerable loss in the low-transmit-SNR regime.  To address this issue, we propose an alternative scheme,  called \emph{power-ratio MSE based joint angle and range estimation} (prMSE-JE), which greatly improves the range estimation accuracy.

The key idea is leveraging relatively high-power received signals, within the surrogate angular support, to estimate the user range via minimizing the sum power-ratio MSE. Thus, it can effectively average the received noise for achieving enhanced range estimation accuracy, when the transmit SNR is relatively low.

Similar to the ASW-JE scheme, the user angle can be estimated based on the angular support, i.e. $\mathcal{A} _{\beta {\rm dB}}(\theta_{\rm u},r_{\rm u})$, where $n$ and $\bar{n}$ denote index set of codeword within angular support and index of estimated user spatial angle $\bar{n}= \lceil\rm{Med}(\mathcal{B})\rceil$, respectively. Then, for the power-ratio, we can obtain the following based on \eqref{Power ratio G}
\begin{align}\eta_{\bar{n},n}& = \frac{\left|\mathbf{b}^{H}(\theta_{\rm u}, r_{\rm u}) \tilde{\mathbf{w}}_n+z^{\prime }_n\right|^2}{\left| \mathbf{b}^H(\theta_{\rm u}, r_{\rm u}) \tilde{\mathbf{w}}_{\bar{n}} +z^{\prime }_{\bar{n}}\right|^2}\nn\\&\approx\frac{\left|\mathbf{b}^{H}(\bar{\theta}_{\rm u}, r_{\rm u}) \tilde{\mathbf{w}}_n\right|^2}{\left| \mathbf{b}^H(\bar{\theta}_{\rm u}, r_{\rm u}) \tilde{\mathbf{w}}_{\bar{n}} \right|^2}\triangleq g_{\bar{n},n}(r_{\rm u}).\label{Eq:PRMSE}
\end{align}
Note that given $\{\bar{\theta}_{\rm u}, \tilde{\mathbf{w}}_n, \tilde{\mathbf{w}}_{\bar{n}}\}$, $g_{\bar{n},n}$ is  determined by the user range $r_{\rm u}$ only. On the other hand, $\eta_{\bar{n},n}$ can be obtained from the received signal powers at the user. Therefore, based on \eqref{Eq:PRMSE}, we propose to estimate the user range by minimizing the following sum power-ratio MSE
\begin{equation}
\sum_{n\in \mathcal{B}}|\eta_{\bar{n},n}-g_{\bar{n},n}(r_{\rm u})|^2.
\end{equation} The optimization problem can be formulated as follows
 \begin{subequations}
\begin{align}
\!\!\!\!({\bf P1}):\!\min_{r_{\rm u}}~  & \sum_{n\in \mathcal{B}}|\eta_{\bar{n},n}-g_{\bar{n},n}(r_{\rm u})|^2
\nn\\
\text{s.t.}~~
& Z_{\rm Fre}\le r_{\rm u}\le Z_{\rm{Rayl}}, \label{Eq:P1}
\end{align}
where $ Z_{\rm Fre}=0.5\sqrt{\frac{D^3}{\lambda}}$ and $Z_{\rm Rayl} $ denotes the Fresnel distance and Rayleigh distance\cite{you2023nearfield}, respectively. 
\end{subequations}

Note that $g_{\bar{n},n}(r_{\rm u})$ is a highly complicated function of $r_{\rm u}$ (which has a similar form with $ G(\theta_{\rm u}, r_{\rm u}, \phi)$ in \eqref{Eq:Gmua}), hence rendering $g_{\bar{n},n}(r_{\rm u})$ a non-convex function.  Although  (P1) is a non-convex optimization problem, it only involves one variable $r_{\rm u}$. Thus, the optimal solution to (P1) can be obtained by applying an exhaustive search over $r_{\rm u}$ in the set
$\mathcal{Z}_{r_{\rm u}}=\left\{r_{\rm u} \mid r_{\rm u}=r_{\min}, r_{\min }+\Delta r, \cdots, r_{\max }\right\},$
where $r_{\min }=Z_{\rm Fre}, r_{\max }= Z_{\rm Rayl}$, and $\Delta r$ represents the searching step size.  Based on the above, we introduce the main procedures for the prMSE-JE scheme.

\subsubsection{\underline{\textbf{Beam sweeping}}} The BS applies the conventional DFT codebook $\tilde{\mathbf{W}}_{\rm{DFT}}$  for beam sweeping, while the user receives $N$ signals with power  $p_n(\tilde{\mathbf{w}}_n), \forall n\in\mathcal{N}$.
%
\subsubsection{\underline{\textbf{Joint angle and range estimation}}} With the received signal powers $\{p_n(\tilde{\mathbf{w}}_n),\forall n\in\mathcal{N} \}$, the user angle and range are estimated as follows.
\begin{itemize}
	\item \textbf{Angle estimation:} Similar to the ASW-JE scheme, the user angle is estimated from the middle of angular support. Mathematically, the index set of the angular support is $\mathcal{B} = \{n |  p_n(\tilde{\mathbf{w}}_n)  > \kappa^2  \max_{\tilde{\mathbf{w}}_n} p_n(\tilde{\mathbf{w}}_n) \}$ with $\kappa^2\approx 0.5$  and the estimated user angle is $\bar{\theta}^{(2)}_{\rm u}=\theta_{\bar{n}}$.

	\item \textbf{Range estimation:} Given $\bar{\theta}^{(2)}_{\rm u}$, the surrogate angular support is constructed (similarly as in \eqref{Eq:recpowratio}). Then, the user range can be estimated by solving problem (P1) via the exhaustive search, which is denoted by $\bar{r}^{(2)}_{\rm u}$.  
\end{itemize}

\begin{remark}[Angle and range  estimation error]\emph{In general, the estimation error of the prMSE-JE scheme arises from two factors. The first one is the angle estimation error, which is affected by the received SNR, the resolution of DFT codebook, etc., as discussed in Remark \ref{Re:angle_error}. Moreover, the range estimation needs the information of estimated angle, thus its estimation  error is affected by the angle estimation error. In other words, the proposed prMSE-JE scheme can obtain high-accuracy range estimation if the angle estimation is accurate enough. However, in practice, the angle estimation error always exist due to the quantization of DFT codebook and environmental noises. To tackle this problem, \emph{middle-$K$ angle estimation} method, as elaborated in Remark~\ref{Re:SelectionK}, can be employed to improve the angle estimation accuracy.
Second, besides the received noise, the exhaustive search over $r_{\rm u}$ to solve the problem (P1) may also incur estimation error in the range sampling. Therefore, in general, there is a need to strike a \emph{complexity-accuracy} trade-off between minimizing the complexity of exhaustive search and maximizing the range estimation accuracy, while the solution performance is also affected by the received noise power.}
\end{remark}

\subsubsection{\underline{\textbf{Beam determination}}} Similar to the ASW-JE scheme, after the joint angle and range estimation with DFT codebook, the \emph{1D-off-grid polar-domain codeword} presented in Section~\ref{SK1Sec:BeamDeter} can be used for selecting the optimal beam codeword for data transmission. In this case, the designed XL-array beamforming vector is $\mathbf{v}=\mathbf{b}(\bar{\theta}^{(2)}_{\rm u}, \bar{r}_{\rm u}^{(2)})$, where $\bar{\theta}^{(2)}_{\rm u}$ is on-grid and $\bar{r}_{\rm u}^{(2)}$ is flexibly controlled in off-grid manner.
\vspace*{-5pt}
\subsection{Comparisons and Discussions}
Last, we compare the performance of proposed two near-field beam training schemes and the existing benchmark schemes in Section~\ref{Sec:Bench}, in terms of training overhead, estimation accuracy, and design complexity. Moreover, several extensions of this work are also presented.

\begin{itemize}
\item \textbf{Training overhead:} Both proposed schemes require a DFT-codebook based beam sweeping and a small number of training symbols for the middle-$K$ angle estimation and training. Thus, it requires $T^{(1)}=T^{(2)}=N+K$ training symbols only. This is much smaller than the two  benchmark schemes with $T^{\rm (2P)}=N+KS$ and $T^{\rm (ex)}=NS$. For example, consider the case where $N = 256$, $K=3$, and $S=5$. Then we have $T^{(1)}=T^{(2)} = 259$, which is smaller than $T^{\rm (ex)} = 1280$, $T^{\rm (2P)} = 271$.
\item \textbf{Angle/range estimation accuracy:} For angle estimation, the proposed two schemes achieve similar  estimation accuracy with the two-phase near-field beam training scheme \cite{Zhang2022fast}. While for range estimation, both proposed two schemes are able to achieve finer-grained range estimation. Compared with the ASW-JE scheme, the prMSE-JE scheme is expected to achieve \emph{more accurate} range estimation. This is because the prMSE-JE scheme utilizes all received signal powers in the angular support for minimizing the power-ratio MSE, rather than exploiting the surrogate angular support width solely as in the ASW-JE scheme. The detailed numerical comparison will be presented in simulation results in Section~\ref{SimuR}. 
\item \textbf{Design complexity:} For the proposed ASW-JE scheme, the overall computational complexity is determined by the exhaustive numerical search for $\mu_0$ in \eqref{Eq:RanEst}, which is  $\mathcal{O}(|\mathcal{Z}_{\mu}|N)$, where is $|\mathcal{Z}_{\mu}|$ is the cardinality of the searching collection over $\mu$. While for the prMSE-JE scheme, its computational complexity arises from the exhaustive search over $r_{\rm u}$, hence its complexity order is $\mathcal{O}(|\mathcal{Z}_{r_{\rm u}}|\cdot N \cdot |\mathcal{B}|)$, where $|\mathcal{Z}_{r_{\rm u}}|$ is the cardinality of the searching collection over $r_{\rm u}$, and $|\mathcal{B}|$ is the cardinality of angular support in \eqref{Eq:B}. Generally, the sampled ranges is set from the Fresnel distance to Rayleigh distance.
\end{itemize}

\begin{remark}[Multipath channels]\emph{For the multipath channel case, the extension of the proposed schemes can be divided into the following two cases.
		\begin{itemize}
			\item \textbf{LoS-dominant channel:} When the LoS channel component is dominant, the NLoS components can be treated as environment scatters. Thus, the proposed schemes still hold as the angle and range estimation hinge on the LoS channel component.
			\item \textbf{Comparable multi-path components:} For this case, the joint angle and range estimation is much more complicated, because there may arise overlapped angular supports caused by the LoS path and other NLoS components, which poses great challenges for angle estimation.
			Therefore, more complicated beam training schemes need to be developed for this case, which is left for our future work. 
		\end{itemize} 
}	
\end{remark}
\begin{remark}[Universal in both near- and far-field communications] \emph{The proposed joint angle and range estimation schemes  apply to both near- and far-field communications. Note that, the user angle can be estimated by finding the middle of defined angular support. 
As such, the angular support width can be employed to decide whether the user is located in the near-field or far-field. 
Specifically, if the angular support width is large, the user is located in the near-field region of XL-arrays, which thus can be exploited to estimate the user range by the proposed ASE-JE and prMSE-JE schemes. 
On the other hand, if the angular support width is small (e.g., only contains one DFT codeword), it indicates that the user is located in the far-field region. 
In this case, there is no need for subsequent range estimation.}  
\end{remark}

\vspace*{-5pt}
\section{Numerical results}\label{SimuR}
In this section, we provide numerical results to validate the effectiveness of our proposed ASW-JE and prMSE-JE near-field beam training schemes. The system parameters are set as follows. The XL-array  is equipped with $N=256$ antennas and operates at $f=30$ GHz frequency band. To ensure a sufficiently small column coherence of each two near-field steering vectors, we set  $\beta_{\Delta}=1.4$\cite{Cui2022channel}, which leads to $\alpha_{\Delta} = 41.80$. As such, the number of sampled ranges on each angle is $S = 5$. 
The reference path-loss is $\beta=(\lambda/4\pi)^2=-62$ dB and the (reference) SNR is defined as ${\rm SNR}=\frac{P\beta N}{r_{\rm u}^2\sigma^2}$ \cite{Zhang2022fast}, where the transmit power $P = 30 $ dBm and noise $\sigma^2=-70$ dBm, respectively.
The number of NLoS channel paths from the BS to the user is set to $L=2$.
 Moreover, we set the Rician factor $\kappa=30$ dB, which is practical for mmWave frequency bands \cite{10130629,10217152}.
Finally, all simulation results are carried out over $1000$ random channel realizations.

For performance comparison, we consider the following schemes.
\begin{itemize}
\item[1)] \textbf{Proposed ASW-JE scheme}: For this scheme, we consider both the cases with $K=1$ or $K=3$, which select $1$ and $3$ candidate estimated angles, respectively (see Remark~\ref{Re:SelectionK}). 
\item[2)] \textbf{Proposed prMSE-JE scheme:}  Similarly, for this scheme, we consider both the cases with $K=1$ or $K=3$. 
\item[3)] \textbf{Perfect-CSI based beamforming:} The beamforming vector is perfectly aligned with the channel steering vector, i.e., $ \mathbf{v}_{\rm perf}=\mathbf{b}(\theta_{\rm u},r_{\rm u})$.
 \item[4)] \textbf{Exhaustive-search beam training:} This scheme conducts an exhaustive-search over the polar-domain codebook, which consists of a 2D search over both the angle and range domains \cite{Cui2022channel}.
 \item[5)] \textbf{Two-phase beam training:} This scheme firstly estimates the user angle using the DFT codebook and then finds the user range using the polar-domain codebook \cite{Zhang2022fast}.
 \item[6)] \textbf{Far-field beam training:} This scheme directly selects the best codeword in  the DFT codebook that yields the maximum received signal power at the user \cite{9129778}.
\end{itemize}

We employ the following performance metrics to compare different schemes.
Specifically, the normalized mean square error (NMSE) is employed to evaluate the accuracy of estimated angle, which is defined as 
$\mathrm{NMSE}_{\rm angle}=\frac{\mathbb{E}\left(|\theta_{\rm u}-\bar{\theta}_{\rm u}|^2\right)}{\mathbb{E}\left(|\theta_{\rm u}|^2\right)},$ 
where $\theta_{\rm u}$ and $\bar{\theta}_{\rm u}$ denote the true user spatial angle and estimated user angle by different schemes, respectively. 
Similarly, the NMSE for range estimation is defined as 
$\mathrm{NMSE}_{\rm range}=\frac{\mathbb{E}\left(|r_{\rm u}-\bar{r}_{\rm u}|^2\right)}{\mathbb{E}\left(|r_{\rm u}|^2\right)},$
where $r_{\rm u}$ and $\bar{r}_{\rm u}$ denote the true and estimated user range by different schemes, respectively.
In addition, based on the signal model in (\ref{Eq:general_Sig}), the user achievable rate is given by 
$R=\log_{2}\l(1+\frac{|\mathbf{h}_{\rm near}^H \mathbf{v}|^2}{\sigma^2}\r),$
where the beamforming vector $\mathbf{v}$ is obtained by different beam training schemes. 
Moreover, in order to show the effect of beam training ovehead on the rate performance,  we also adopt another rate performance metric, namely, the effective achievable rate, which is  given by \cite{10130629}
$R_{\rm eff}=\left(1-\frac{T_{\rm tra}}{T_{\rm tot}} \right)\log_{2}\l(1+\frac{|\mathbf{h}_{\rm near}^H \mathbf{v}|^2}{\sigma^2}\r),$
where $T_{\rm tot}$ denotes the total number of symbols in each time frame and $T_{\rm tra}$ denotes the number of required training symbols (equivalently the training overhead). 
\vspace*{-5pt}
\subsection{Angle/Range Estimation NMSE}

First, Fig.~\ref{fig:NMSE_angle} shows the effect of reference SNR on the angle estimation accuracy with fixed user range $r_{\rm u}=12$ m and user angle $\theta_{\rm u}$ being randomly generated in the spatial domain $[-1,1]$. 
First, it is observed that the angle estimation NMSEs of proposed two schemes decrease with an increasing reference SNR under different $K$, and is close to that of the two-phase beam training schemes, while that of the exhaustive-search beam training scheme keeps unchanged. 
This is because under the DFT codebook, as SNR increases, the received beam pattern at the user generally become stabilized, thus causing smaller angle estimation error (see Remark~\ref{Re:angle_error}). 
In contrast, with the polar-domain codebook (which requires much more training symbols), the exhaustive-search beam training utilizes the beam-focusing property and selects the best polar-codeword that yields the maximum received signal power at the user, hence its angle estimation NMSE is robust to SNR. 
Second, in the high-SNR regime, by slightly increasing the number of candidate angles up to $K=3$, the proposed ASW-JE and prMSE-JE schemes attain much lower angle estimation NMSE and eventually achieve similar NMSE with that of the exhaustive-search beam training. 
This is because the selection of 3 candidate angles can further reduce the probability of angle estimation error as illustrated in Remark~\ref{Re:SelectionK}.

\begin{figure}[t!]
	\centering
	\vspace{-5pt}
	\includegraphics[width=7.5cm]{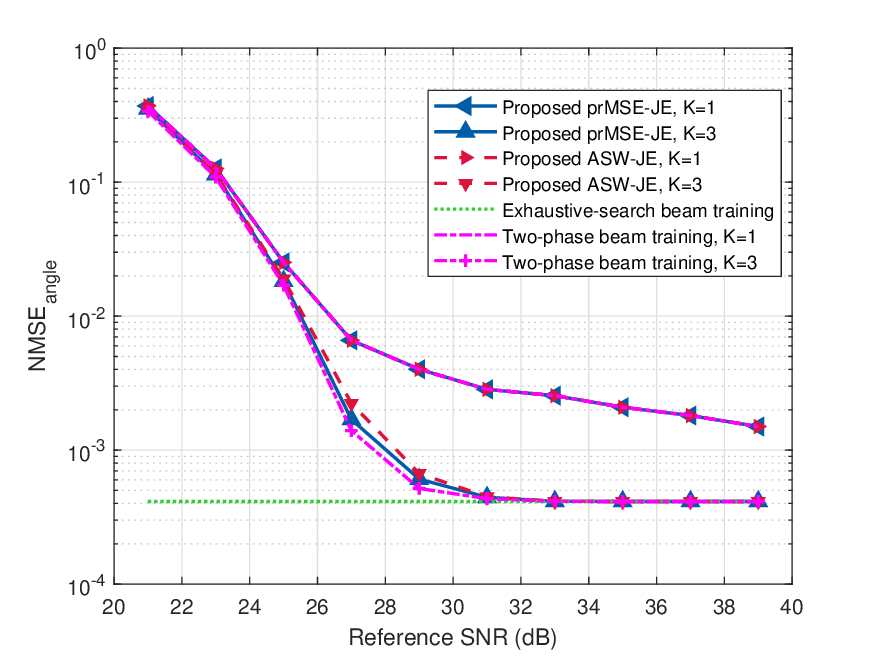}
	\caption{Angle estimation NMSE versus reference SNR.}
	\label{fig:NMSE_angle}
\end{figure}
\begin{figure}[t!]
	\centering
	\vspace{-5pt}
	\includegraphics[width=7.5cm]{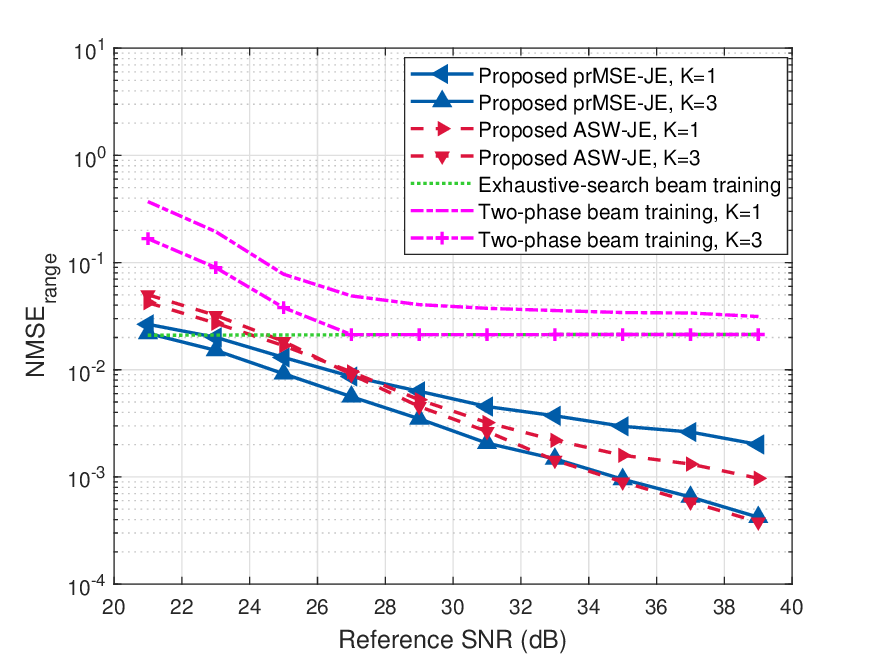}
	\caption{Range estimation NMSE versus reference SNR.}
	\label{fig:NMSE}
	\vspace{-10pt}
\end{figure}
Fig.~\ref{fig:NMSE} shows the curves of range estimation NMSE versus reference SNR. Several interesting observations are made as follows. First, the range estimation NMSE of the proposed two schemes decreases with the reference SNR, and they significantly outperform the two benchmark beam training schemes in the high-SNR regime. 
This is because the two benchmark beam training schemes employ predefined range codewords to estimate the user range; thus its accuracy is determined by the number of sampled ranges $S$. 
However, our proposed two schemes apply the off-grid range estimation (see \eqref{Eq:RanEst} and \eqref{Eq:P1}); hence achieving better resolution/accuracy in the high-SNR regime.
Second, it is observed that our proposed prMSE-JE and ASW-JE with $K=3$ achieve better range accuracy than those with $K=1$. 
This is because more candidate angles can attain more accurate angle as illustrated in Remark~\ref{Re:range_error}. Finally, it is observed that the performance of proposed prMSE-JE scheme is superior to the ASW-JE scheme. 
This is because the prMSE-JE scheme leverages all  received signal powers within the surrogate angular support (see \eqref{Eq:P1}), while the ASW-JE scheme only leverages the surrogate angular support width for range estimation (see \eqref{Eq:RanEst}), and thus may suffering inaccurate range estimation in lower-SNR regime.

\vspace*{-5pt}  
\subsection{Effect of System Parameters}
\begin{figure}[t!]
	\centering
	\vspace{-5pt}
	\includegraphics[width=7.5cm]{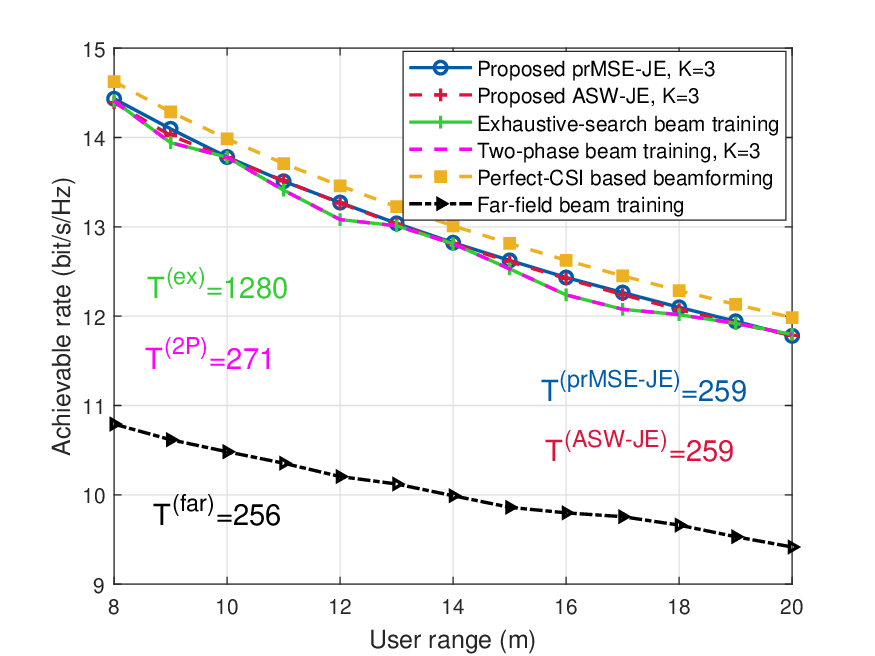}
	\caption{Achievable rate versus user range.}
	\label{fig:rate_versus_distance}
\end{figure}
\begin{figure}[t!]
	\centering
	\vspace{-5pt}
	\includegraphics[width=7.5cm]{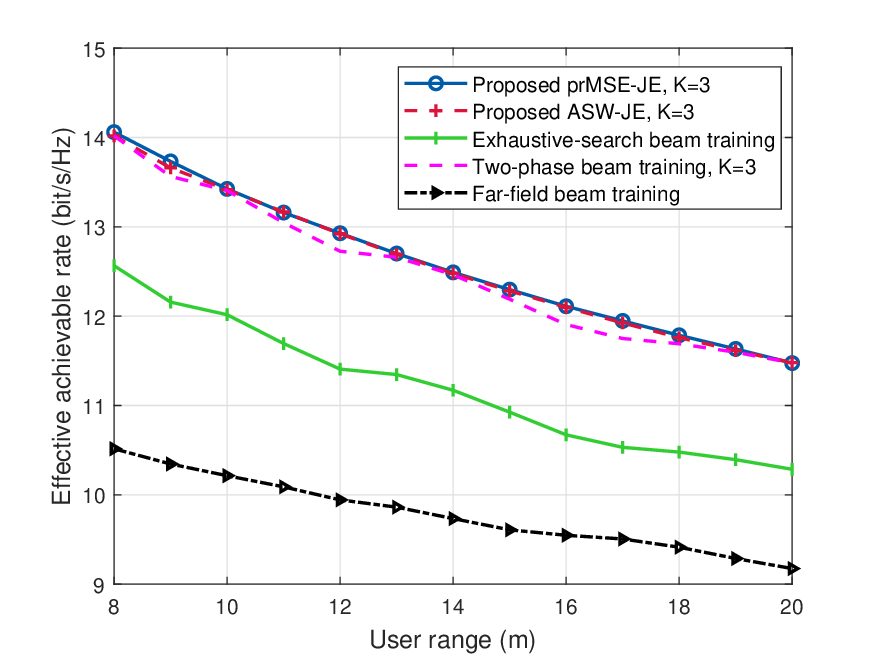}
	\caption{Effective achievable rate versus  user range.}
	\vspace{-10pt}
	\label{fig:eff_rate_versus_distance}
\end{figure}
\subsubsection{Effect of user range}
We plot in Fig.~\ref{fig:rate_versus_distance} the achievable rate performance versus the user range.
The user is randomly generated at the range from 8 m to 20 m, and at the spatial angle region $[-1,1]$.
First, it is observed that the proposed prMSE-JE scheme with $K=3$ exhibits better rate performance than the exhaustive-search beam training scheme. 
This improvement is attributed by the prominent energy-spread effect within this range region, hence leading to a larger angular support region that is beneficial to the ASW-JE and prMSE-JE schemes. 
Second, for all user ranges, the ASW-JE and prMSE-JE schemes with $K=3$ outperform the far-field beam training scheme, although they both use the DFT codebook for near-field beam training. 
This is because our proposed ASW-JE and prMSE-JE schemes effectively exploit the angle and range information embedded in the received beam pattern, while the conventional far-field beam training scheme simply selects the codeword yielding the maximum received signal power. 
Next, the proposed ASW-JE and prMSE-JE schemes achieve comparable achievable rate due to the fact that they have similar user angle estimation accuracy, while both obtain highly accurate user range estimation.
Finally, our proposed schemes attain more close rate performance with the perfect-CSI based beamforming scheme, while using much less beam training symbols than the exhaustive-search beam training  and the two-phase beam training schemes. 


In Fig.~\ref{fig:eff_rate_versus_distance}, we consider an alternative performance metric, effective achievable rate. 
First, it is observed that the proposed ASW-JE and prMSE-JE schemes with $K=3$ obtain better effective achievable rate than the two-phase beam training, and significantly outperform the exhaustive-search beam training scheme. 
This is because the proposed ASW-JE and prMSE-JE schemes greatly reduce the beam training overhead without sacrificing the achievable rate performance too much.
It is worth noting that the beam training overhead of the exhaustive-search beam training scheme is prohibitively high which is not suitable in practical wireless systems.

\begin{figure}[t!]
	\centering
	\vspace{-5pt}
	\includegraphics[width=7.5cm]{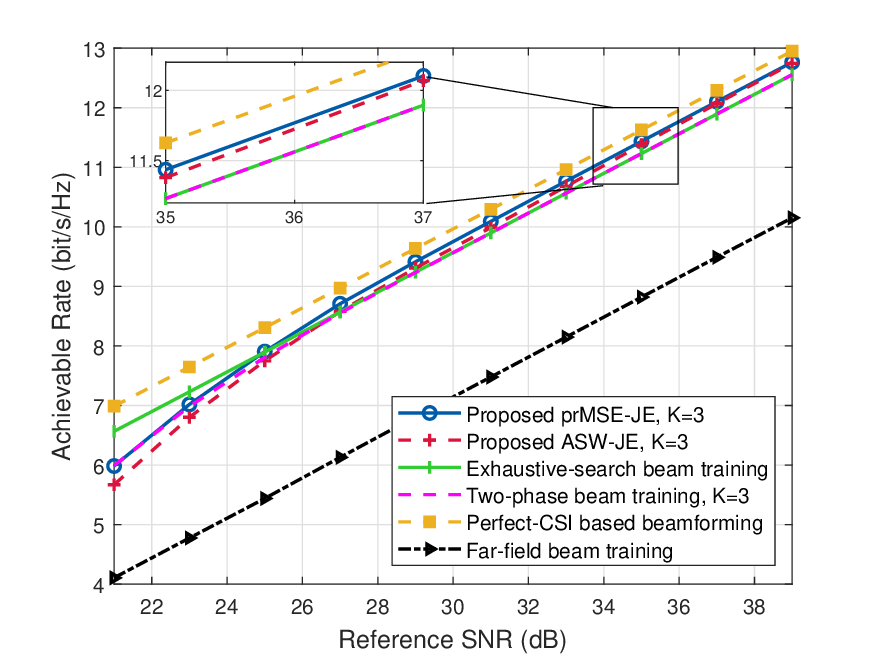}
	\caption{Achievable rate versus reference SNR.}
	\vspace{-5pt}
	\label{fig:rate_versus_snr}
\end{figure}
\begin{figure}[t!]
		\vspace{-5pt}
	\centering

	\includegraphics[width=7.5cm]{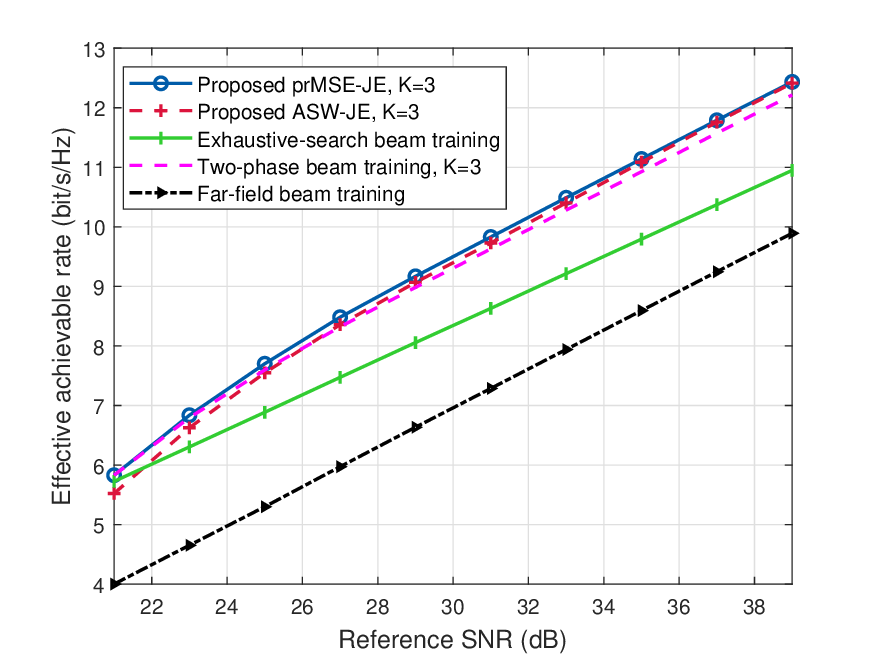}
	\caption{Effective achievable rate versus reference SNR.}

	\label{fig:eff_rate_versus_snr}
		\vspace{-10pt}
\end{figure}
\subsubsection{Effect of reference SNR}
In Fig.~\ref{fig:rate_versus_snr}, we compare the achievable rates of different schemes versus the reference SNR  given the user range $r_{\rm u}=16$ m with the user angle randomly distributed in region $[-1,1]$.
First, it is observed that both proposed ASW-JE and prMSE-JE schemes with $K=3$  outperform the exhaustive-search beam training scheme in the high-SNR regime (e.g., 28-38 dB).
This is because both proposed schemes estimate the user range from the received beam pattern in an off-grid manner. 
Second, in the low-SNR regime, the proposed ASW-JE and prMSE-JE schemes suffer a slightly achievable rate loss as compared to the exhaustive-search beam training scheme. 
This is because in the low-SNR regime, the received beam pattern fluctuates drastically, so that it may cause some error in the angle estimation (see Remark~\ref{Re:angle_error}). 
Next, the achievable rates of all schemes increase with the increasing reference SNR.
Finally, the proposed prMSE-JE scheme outperforms the proposed ASW-JE scheme due to the more robust and accurate user range estimation.

In Fig.~\ref{fig:eff_rate_versus_snr}, we plot the effective achievable rates of different schemes versus the reference SNR.
First, the proposed ASW-JE and  prMSE-JE schemes with $K=3$ obtain slightly rate improvement as compared to the two-phase beam training scheme and significant rate improvement over the exhaustive-search beam training scheme in the high-SNR regime. 
This is because the training overhead of proposed schemes is very small, while at the same time achieving better rate performance than the benchmark schemes. 
Second, it is observed that in the low-SNR regime, the rate performances of proposed schemes increase drastically with the increasing SNR. This is because the estimated angle and range become more accurate with a higher SNR.
Finally, one can observe that the rate performance of the prMSE-JE scheme is superior to the ASW-JE scheme due to more robust range estimation. 

\begin{figure}[t!]
	\centering
	\vspace{-5pt}
	\includegraphics[width=7.5cm]{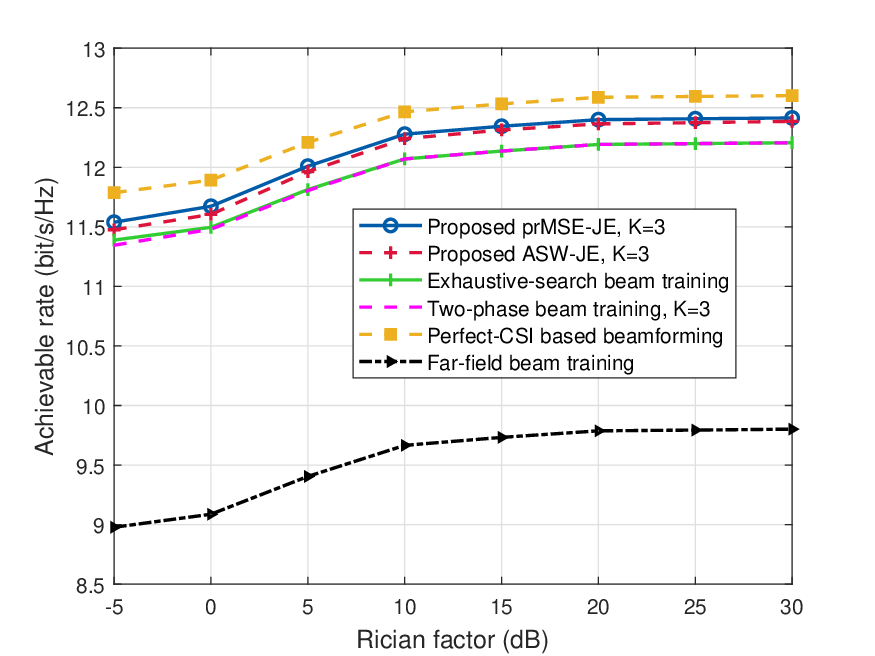}
	\caption{Achievable rate versus Rician factor.}
	\vspace{-5pt}
	\label{fig:rate_rician}
\end{figure}
\begin{figure}[t!]
	\vspace{-5pt}
	\centering
	\includegraphics[width=7.5cm]{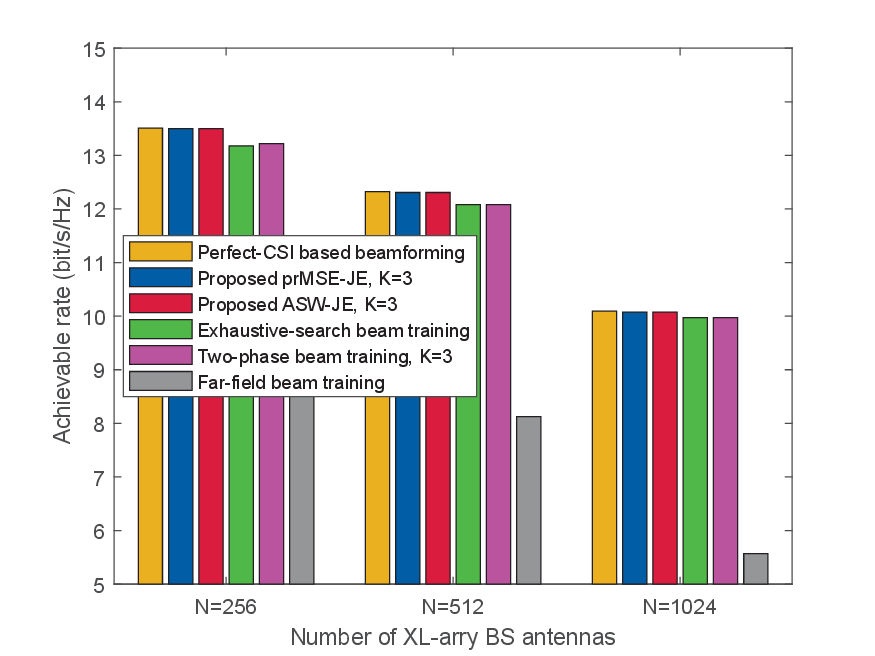}
	\caption{Achievable rate versus number of XL-array BS antennas.}
	\label{fig:rate_number}
	\vspace{-10pt}
\end{figure}

\subsubsection{Effect of Rician factors}
Next, we evaluate the effect of Rician factor on the achievable rate by different schemes in Fig.~\ref{fig:rate_rician}. 
First, it is observed that, as the Rician factor increases, the achievable rates of all schemes increase at first and eventually saturate when the Rician factor is sufficiently large.
Second, one can observe that the proposed ASW-JE and prMSE-JE schemes with $K=3$ outperform the exhaustive-search beam training scheme and two-phase beam training scheme due to the superior accuracy of user range estimation. 

\subsubsection{Effect of number of XL-array antennas}

Finally, we investigate the effect of number of BS antennas on the achievable rate in Fig.~\ref{fig:rate_number}.
Specifically, the user is assumed to be located at the Fresnel boundary, which increases rapidly with the number of antennas.
It is observed that the achievable rates of all schemes decrease with the number of antennas due to the larger path-loss.
Second, one can observe that our proposed schemes outperform the exhaustive-search beam training and two-phase beam training schemes under different number of antennas.
Last, the proposed schemes obtain considerable rate improvement compared to the far-field beam training scheme.

\vspace{-2pt}
\section{Conclusions}
In this paper, we studied efficient near-field beam training schemes using conventional DFT codebook.
To this end, we first show an interesting result that when far-field beamforming vectors are applied for beam scanning, the received beam pattern contains useful user angle and range information. 
Then, two new schemes, namely, ASW-JE and prMSE-JE, were proposed to jointly estimate the user angle and range by using the DFT codebook. Numerical results showed that both proposed schemes achieved lower training overhead and more accurate range estimation than existing benchmark schemes, while the prMSE-JE scheme is more robust in the low-SNR regime.

\bibliographystyle{IEEEtran}
\bibliography{IEEEabrv}

\begin{IEEEbiography}
	[{\includegraphics[width=1in, height=1.25in,clip, keepaspectratio]{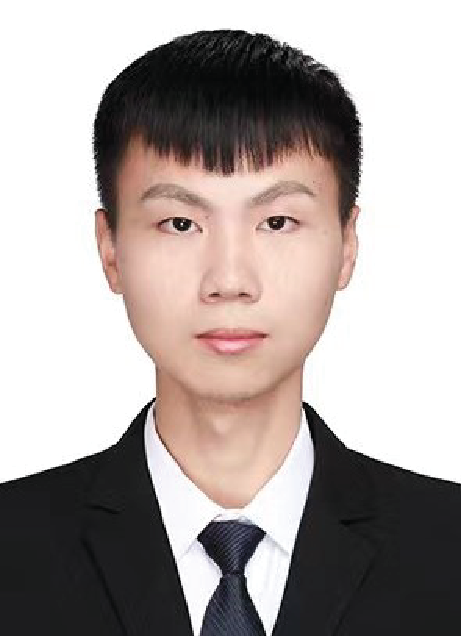}}]
	{Xun Wu} received the B.S. degree in communication engineering from Southern University of Science and Technology, Shenzhen, China, in 2022. He is currently pursuing the M.S. degree in Electronic Science and Technology from Southern University of Science and Technology, Shenzhen, China. His research interests include near-field communications and XL-MIMO communications.
\end{IEEEbiography}
\begin{IEEEbiography}
	[{\includegraphics[width=1in, height=1.25in,clip, keepaspectratio]{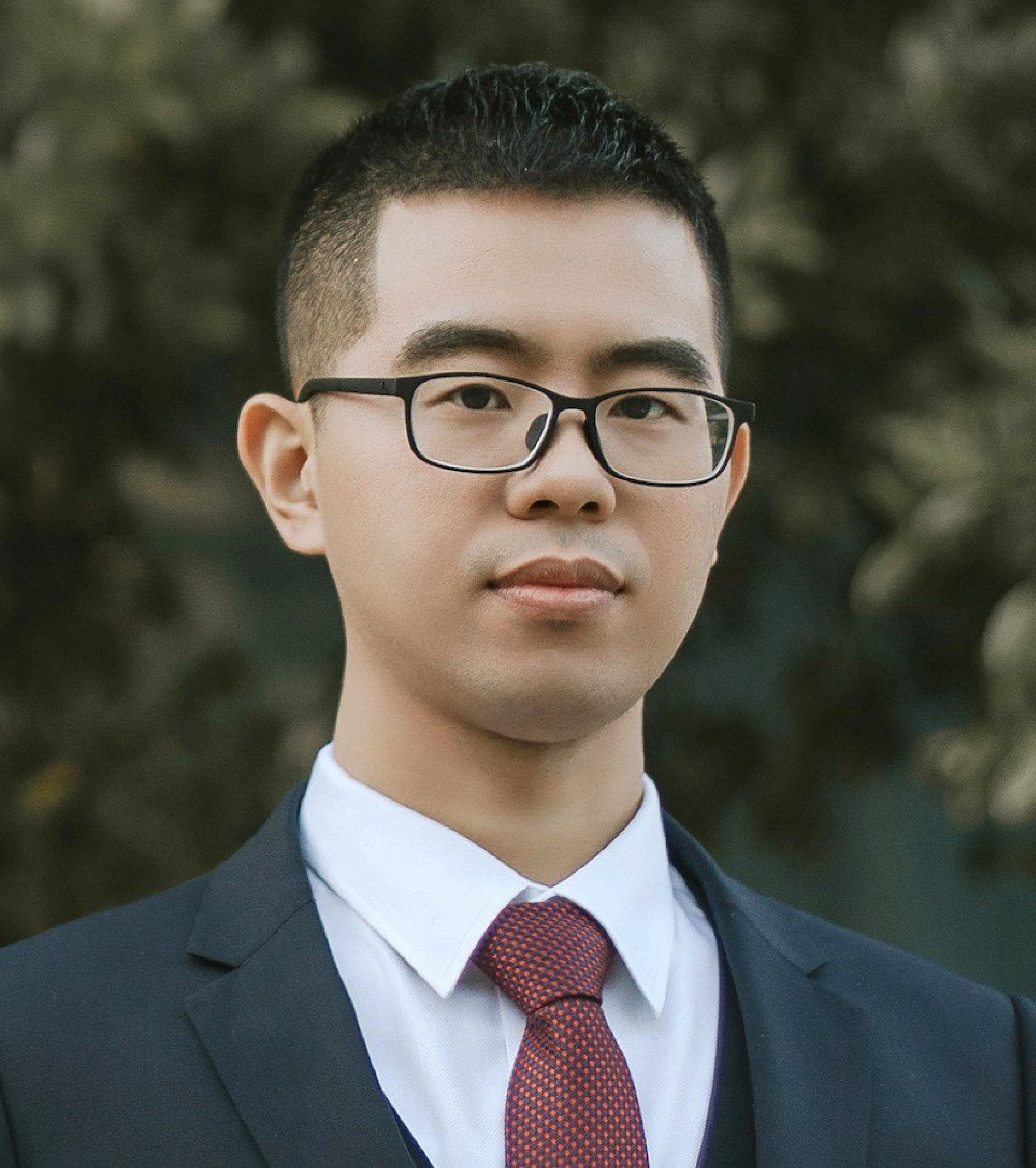}}]{Changsheng You} (Member, IEEE) received his B.Eng. degree in 2014 from University of Science and Technology of China (USTC) and Ph.D. degree in 2018 from The University of Hong Kong (HKU). He is currently an Assistant Professor at Southern University of Science and Technology, and was a Research Fellow at National University of Singapore (NUS). His research interests include near-field communications, intelligent reflecting surface, UAV communications, edge computing and learning. Dr. You is a Guest Editor for \textsc{IEEE Journal on Selected Areas in Communications} (JSAC), an editor for \textsc{IEEE Transactions on Wireless Communications} (TWC), \textsc{IEEE Communications Letters} (CL), \textsc{IEEE IEEE Transactions on Green Communications and Networking} (TGCN), and \textsc{IEEE Open Journal of the Communications Society} (OJ-COMS). He received the IEEE Communications Society Asia-Pacific Region Outstanding Paper Award in 2019, IEEE ComSoc Best Survey Paper Award in 2021, IEEE ComSoc Best Tutorial Paper Award in 2023. He is listed as the Highly Cited Chinese Researcher, Exemplary Reviewer of the IEEE Transactions on Communications (TCOM) and IEEE Transactions on Wireless Communications (TWC).
\end{IEEEbiography}

\begin{IEEEbiography}
	[{\includegraphics[width=1in, height=1.25in,clip, keepaspectratio]{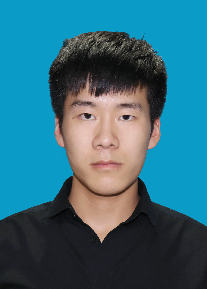}}]
	{Jiapeng Li} received the B.S. degree in information and computing science from Beijing University of Civil Engineering and Architecture, Beijing, China, in 2020, and the M.S. degree in computer science from Southwest University, Chongqing, China. He is currently pursuing the Ph.D. degree in Mathematics from Southern University of Science and Technology, Shenzhen, China. His research interests include near-field communications and lens antennas array.
\end{IEEEbiography}

\begin{IEEEbiography}
	[{\includegraphics[width=1in, height=1.25in,clip, keepaspectratio]{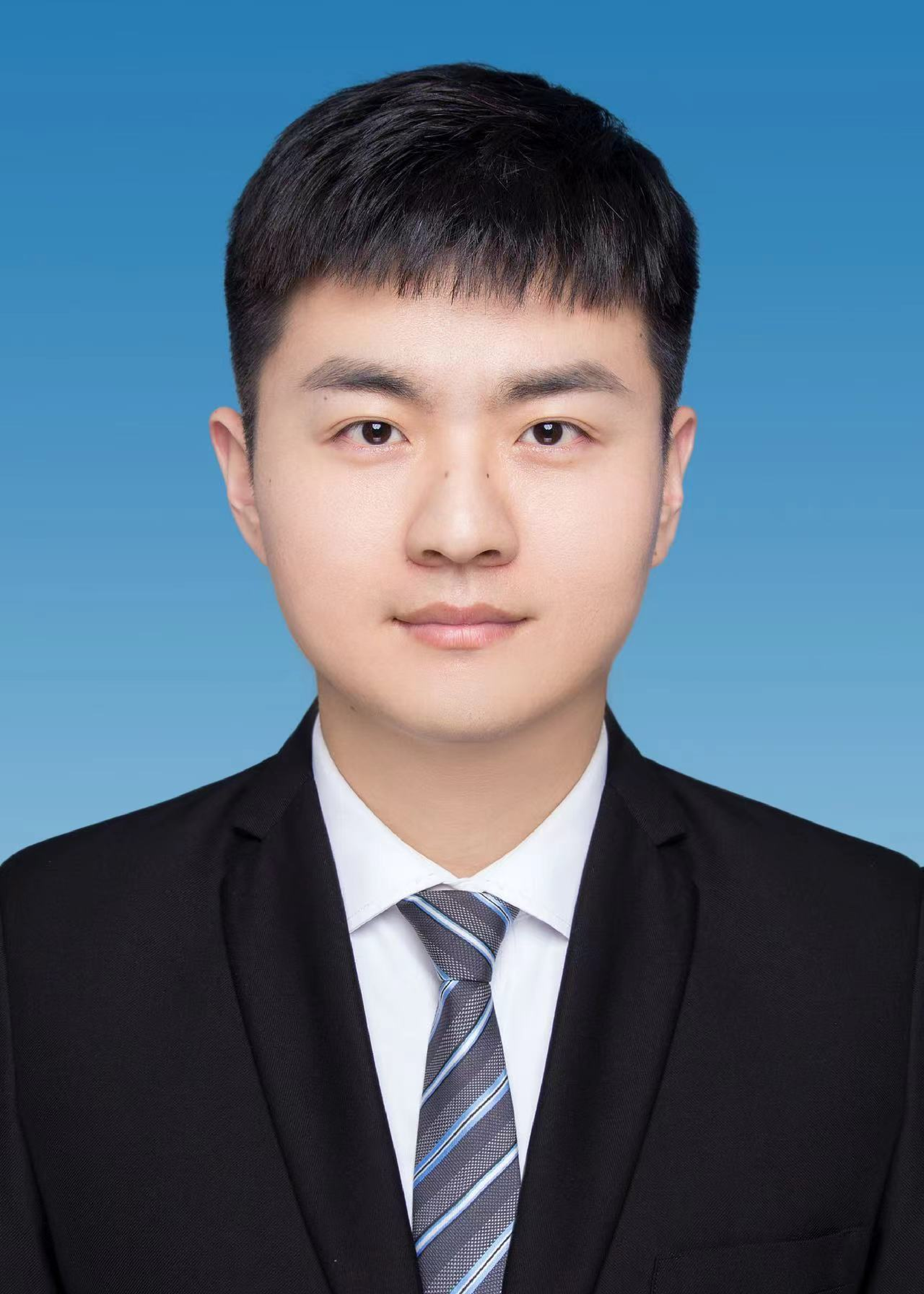}}]
{Yunpu Zhang} (Graduate Student Member, IEEE) is currently pursuing the Ph.D. degree in electrical engineering with City University of Hong Kong (CityU), Hong Kong, China. He was a Visiting Student with the Department of Electrical and Electronic Engineering, Southern University of Science and Technology, Shenzhen, China. His research interests include near-field communications  and intelligent reflecting surface.
\end{IEEEbiography}

\end{document}